\crefname{figure}{Fig.}{Fig.}
\newtheorem{theorem}{Theorem}
\newtheorem{corollary}{Corollary}[theorem]
\newtheorem{lemma}{Lemma}
\theoremstyle{definition}
\newtheorem{definition}{Definition}
\theoremstyle{theorem}
\theoremstyle{definition}
\newcommand{\ham}{H}
\newcommand{\allterms}{\mathcal{S}}
\newcommand{\nonconham}{H_\text{nc}}
\newcommand{\nonconterms}{\mathcal{S}_\text{nc}}
\newcommand{\extraham}{H_\text{c}}
\newcommand{\extraterms}{\mathcal{S}_\text{c}}
\begin{document}

\title{Contextual Subspace Variational Quantum Eigensolver}

\author{William M. Kirby}
\affiliation{Department of Physics and Astronomy, Tufts University, Medford, MA 02155}
\email{william.kirby@tufts.edu}
\orcid{0000-0002-2778-1703}

\author{Andrew Tranter}
\affiliation{Department of Physics and Astronomy, Tufts University, Medford, MA 02155}
\affiliation{Cambridge Quantum Computing, 9a Bridge Street Cambridge, CB2 1UB United Kingdom}
\orcid{0000-0002-0299-3478}

\author{Peter J. Love}
\affiliation{Department of Physics and Astronomy, Tufts University, Medford, MA 02155}
\affiliation{Computational Science Initiative, Brookhaven National Laboratory, Upton, NY 11973}
\orcid{0000-0002-8344-0532}

\begin{abstract}
We describe the \emph{contextual subspace variational quantum eigensolver} (CS-VQE), a hybrid quantum-classical algorithm for approximating the ground state energy of a Hamiltonian. The approximation to the ground state energy is obtained as the sum of two contributions. The first contribution comes from a noncontextual approximation to the Hamiltonian, and is computed classically. The second contribution is obtained by using the variational quantum eigensolver (VQE) technique to compute a contextual correction on a quantum processor. In general the VQE computation of the contextual correction uses fewer qubits and measurements than the VQE computation of the original problem. Varying the number of qubits used for the contextual correction adjusts the quality of the approximation. We simulate CS-VQE on tapered Hamiltonians for small molecules, and find that the number of qubits required to reach chemical accuracy can be reduced by more than a factor of two. The number of terms required to compute the contextual correction can be reduced by more than a factor of ten, without the use of other measurement reduction schemes. This indicates that CS-VQE is a promising approach for eigenvalue computations on noisy intermediate-scale quantum devices.
\end{abstract}

\maketitle

\section{Introduction}
\label{intro}

The \emph{variational quantum eigensolver} (VQE) is the leading algorithm for quantum simulation on noisy intermediate-scale quantum (NISQ) computers, due to the limited resources it requires in both qubit count and coherence time \cite{peruzzo14a,omalley16a,santagati18a,shen2017quantum,paesani2017,kandala17a,hempel18a,dumitrescu18a,colless18a,nam19a,kokail19a,kandala19a,google20a}. VQE is a hybrid quantum-classical algorithm in which the expectation value of the Hamiltonian, or other observable, is computed for an ansatz state generated by a parameterized quantum circuit. Optimization of the ansatz parameters is performed iteratively using an optimization algorithm running on a classical computer. VQE algorithms require a large number of measurements to be performed, and give approximate results due to limitations of the ansatz that can be prepared and noise on the quantum device.  As a result, the largest experiments to date either do not reach chemical accuracy \cite{kandala17a}, do not include all Hamiltonian terms \cite{nam19a}, or simulate restricted models such as Hartree-Fock \cite{google20a}. 

We address these limitations of VQE by providing an approximate simulation method for the full Hamiltonian that can be adjusted to use any amount of available quantum resources. We show that in many cases our method reduces the resources required to reach chemical accuracy. The method is based on the concept of a noncontextual Hamiltonian, in which definite values can be assigned to all Pauli terms simultaneously without contradiction \cite{kirby19a}. In \cite{kirby20a}, we gave a classical algorithm for computing the ground state energies of noncontextual Hamiltonians based on a quasi-quantized model \cite{spekkens07a,spekkens16a}. This raises the question of whether a truly hybrid algorithm can be developed for simulating general Hamiltonians, in which the noncontextual parts of the Hamiltonians are computed classically and contextual quantum corrections are computed using VQE. The method described in this paper is such an algorithm.

The contextual quantum correction is obtained by performing VQE restricted to quantum states that are consistent with the noncontextual ground state. We refer to the space of such quantum states as the \emph{contextual subspace}. The contextual subspace represents the degrees of freedom that remain after the noncontextual degrees of freedom have been fixed. The resulting quantum computations only involve Hamiltonian terms in the complement of the noncontextual Hamiltonian. We also show how to adjust the noncontextual part of the Hamiltonian, in order to move more of the computation onto the quantum computer, while preserving the structure of the quasi-quantized model. 
The technique for accomplishing this is related to ``subspace-search VQE'' \cite{nakanishi19a}, in which excited energies are found by restricting the search space to be orthogonal to the (previously approximated) ground state.
In our case, we are not looking for excited states, but we are implementing VQE in a restricted search space, and part of the technique for achieving this is similar to that in \cite{nakanishi19a} (see \cref{quantum_part} for details of the technique).

As an example, suppose we want to apply VQE to a Hamiltonian on $n$ qubits, but the available quantum processor has only $q$ qubits. CS-VQE permits us to adjust the noncontextual approximation method so that the associated quantum correction uses exactly $q$ qubits. Increasing the number of qubits used on the quantum processor monotonically improves the quality of the overall approximation, interpolating between the noncontextual approximation with no quantum correction and full VQE. Thus, we can tune the quantum part to fit the available quantum resources, with the classical method making up the difference.

CS-VQE is the first hybrid quantum-classical algorithm of its kind, where a nonclassicality criterion (in our case, contextuality) is used to isolate the intrinsically quantum part of a quantum algorithm, and the classical remainder of the algorithm is simulated classically.
Our prior works \cite{kirby19a,kirby20a} defined contextual and noncontextual Hamiltonians, and gave a classical model for noncontextual Hamiltonians, respectively, so the current algorithm is distinct because it is applicable to arbitrary Hamiltonians and because it has a quantum component as well as a classical component.

In the remainder of the introduction, we review the necessary information from \cite{kirby20a} about classical simulation of noncontextual Hamiltonians. In \cref{quantum_part}, we describe the quantum correction procedure.
In \cref{c_subspace_vqe}, we describe how to implement CS-VQE, and show the results of simulating CS-VQE on Hamiltonians for small molecules. Finally, in \cref{conclusion}, we summarize our results and discuss their implications for NISQ computing.

\subsection{Noncontextual Pauli Hamiltonians}
\label{noncontextual_part}

A set of observables is \emph{noncontextual} when it is possible to assign values to them simultaneously without contradiction \cite{bell64a,bell66a,kochen67a,spekkens05a,abramsky11a,raussendorf13a,howard14a,cabello14a,cabello15a,ramanathan14a,de_silva17a,amaral17a,xu18a,raussendorf19c,duarte18a,mansfield18a,okay18a,raussendorf19b}. How can we determine if a set $\allterms$ of Pauli operators is noncontextual? First remove the subset $\mathcal{Z}$ of terms that commute with the whole of $\allterms$. The set $\allterms$ is noncontextual if and only if commutation is an equivalence relation on $\allterms\setminus\mathcal{Z}$~\cite{kirby19a}. The equivalence classes of commutation in $\allterms\setminus\mathcal{Z}$ are {\em cliques} $C_i$ for $i=1,2,...,N$. Operators in the same clique commute, while operators in different cliques anticommute.

Let $\allterms$ be the set of Pauli terms in a general Hamiltonian $H$. We divide $\allterms$ into a noncontextual subset $\nonconterms$ and its complement $\extraterms$. This induces a decomposition of $H$ into a noncontextual part $\nonconham$ whose Pauli terms are $\nonconterms$, and $\extraham$ whose Pauli terms are $\extraterms$~\cite{kirby20a}:
\begin{equation}
\label{ham_decomp}
  \ham = \nonconham+\extraham.
\end{equation}
We also require that $\nonconterms$ be \emph{closed under inference within $\allterms$}:
\begin{definition}
\label{closure_under_inference_within}
	$\nonconterms$ is \emph{closed under inference within $\allterms$} if any operators in $\allterms$ whose values can be inferred from the values of operators in $\nonconterms$ must be included in $\nonconterms$~\cite{kirby20a}.
\end{definition}
Closure under inference is reviewed in detail in \cref{noncontextual_part_app}.
The decomposition \eqref{ham_decomp} is the basis of the CS-VQE method. We will obtain an efficient classical description of the eigenspaces of $\nonconham$, and use this and $\extraham$ to quantum compute a correction to the noncontextual approximation to the ground-state energy.

Because all terms in $\nonconham$ can simultaneously be assigned definite values without contradiction we can introduce a phase space description of its eigenspaces~\cite{kirby19a,raussendorf19b}. The phase space points are the possible joint value assignments to a set of observables derived from $\nonconterms$, which we describe below. The eigenstates of $\nonconham$ are probability distributions over this phase space. This is a quasi-quantized model: a classical phase-space model with an uncertainty relation imposed upon the allowed probability distributions (sometimes called \emph{epistemic states}) on the phase space~\cite{spekkens07a,spekkens16a}. We refer the reader to~\cite{kirby20a} and \cref{noncontextual_part_app} for further general points about quasi-quantized models.

We now describe the states of the model, which we call \emph{noncontextual states}. We first identify a set of observables that define the phase-space points in the model:
\begin{equation}
\label{noncon_generators_points}
    G\cup\{A_1,A_2,...,A_N\}.
\end{equation}
Each $A_i$ is one element of the corresponding clique $C_i$, so the $A_i$ pairwise anticommute. $G$ is an independent generating set for the Abelian group $\overline{\mathcal{Z}}$, which includes $\mathcal{Z}$ as well as all products of pairs of operators in the same clique. The phase space points are all assignments of values $\pm1$ to the observables in the set \eqref{noncon_generators_points}~\cite{kirby20a}.

The noncontextual states are probability distributions over the phase space points generated by~\eqref{noncon_generators_points}. Probability distributions corresponding to valid quantum states must obey an uncertainty relation~\cite{spekkens07a,spekkens16a}. A sufficient condition is that the commuting generators $G_j\in G$ take definite values, and that the expectation values of the $A_i$ form a unit vector~\cite{kirby20a}. This means that each noncontextual state is defined by parameters $(\vec{q},\vec{r})$ such that
\begin{equation}
\label{noncon_params_def}
	\langle G_j\rangle=q_j=\pm1,\quad\langle A_i\rangle=r_i,\quad|\vec{r}|=1.
\end{equation}
In \cite{kirby20a}, we showed how these expectation values for the set \eqref{noncon_generators_points} induce expectation values for all terms $\nonconterms$ in the noncontextual part $\nonconham$ of the Hamiltonian; consequently, a noncontextual state induces an expectation value for $\nonconham$.
We also proved that all expectation values for $\nonconham$ can be generated in this way.
Minimizing this expectation value by varying the noncontextual state $(\vec{q},\vec{r})$ thus provides a variational estimate of the ground state energy of $\nonconham$~\cite{kirby20a}.
We refer to the minimizing assignment $(\vec{q},\vec{r})$ as the \emph{noncontextual ground state}.

The $A_i$ are anticommuting Pauli operators and $\vec{r}$ is a real unit vector, so the observable
\begin{equation}
\label{A_def}
    \mathcal{A}\equiv\sum_{i=1}^Nr_iA_i,
\end{equation}
is a rotated Pauli operator, and thus has eigenvalues $\pm1$. The unitary that maps $\mathcal{A}$ to a single Pauli operator is a sequence of $N-1$ rotations generated by Pauli operators, all of which preserve the $G_j$.
If each of the $A_i$ has expectation value $r_i$ then $\mathcal{A}$ has expectation value $+1$, and vice versa~\cite{kirby20a}.
Thus, the noncontextual state with parameters $(\vec{q},\vec{r})$ is equivalent to a joint value assignment for the set of observables
\begin{equation}
\label{noncon_generators}
    G\cup\{\mathcal{A}\},
\end{equation}
where the value assignments are $G_j\mapsto q_j=\pm1$ for each $j$ and $\mathcal{A}\mapsto+1$.
We refer to the observables in \eqref{noncon_generators} as the \emph{noncontextual generators}.

The noncontextual states therefore correspond to subspaces of quantum states that are stabilized by the operators $q_jG_j$ and by $\mathcal{A}$. These are almost stabilizer subspaces in the usual sense (see e.g. \cite[Sec. 10.3]{nielsen01}), except that $\mathcal{A}$ is not a single Pauli operator, but is unitarily equivalent to one. Therefore, a noncontextual state can be thought of as a stabilizer subspace, one of whose stabilizers has been rotated by an efficiently-describable unitary.

\section{Quantum correction}
\label{quantum_part}

Let $(\vec{q},\vec{r})$ be the noncontextual ground state. If we take the resulting energy of $\nonconham$ as a classical estimate of the ground state energy of the full Hamiltonian $\ham$, we can obtain a quantum correction by minimizing the energy of the remaining terms in the Hamiltonian over the quantum states that are consistent with the noncontextual ground state.
As discussed above, this common eigenspace is a stabilizer subspace up to a rotation on one of the stabilizers.
We refer to this subspace as the \emph{contextual subspace}.

Before we discuss how to find quantum corrections, we establish when such corrections can appear:
\begin{theorem}
\label{null_exp_thm}
    Let $\allterms$ be a set of Pauli operators, and let $\nonconterms$ be a noncontextual subset that is closed under inference within $\allterms$ (see \cref{closure_under_inference_within}).
    Then for any noncontextual state $(\vec{q},\vec{r})$ as in \eqref{noncon_params_def} describing $\nonconterms$, there exists a quantum state consistent with $(\vec{q},\vec{r})$ (i.e., that gives the same expectation values for $\nonconterms$ as $(\vec{q},\vec{r})$) for which the expectation value of every operator in $\extraterms\equiv\allterms\setminus\nonconterms$ is zero.
\end{theorem}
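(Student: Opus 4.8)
The plan is to exhibit an explicit quantum state with the desired properties, namely the maximally mixed state on the contextual subspace itself. Let $\mathcal{C}$ be the simultaneous $+1$-eigenspace of the commuting Hermitian operators $\{q_jG_j\}_j\cup\{\mathcal{A}\}$ (these mutually commute because $G$ generates $\overline{\mathcal{Z}}$, which commutes with every $A_i$, and $\mathcal{A}$ has eigenvalues $\pm1$). I would take $\rho=\Pi_{\mathcal{C}}/\dim\mathcal{C}$, where $\Pi_{\mathcal{C}}=\bigl(\tfrac{I+\mathcal{A}}{2}\bigr)\prod_j\tfrac{I+q_jG_j}{2}$ is the projector onto $\mathcal{C}$, and show that $\rho$ is both consistent with $(\vec{q},\vec{r})$ and annihilates every term of $\extraterms$ in expectation.

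First I would verify consistency. Any state supported on $\mathcal{C}$ has $\langle G_j\rangle=q_j$ and $\langle\mathcal{A}\rangle=1$ by construction. The identity $\{\mathcal{A},A_i\}=2r_iI$ (which follows from $A_i^2=I$ and the pairwise anticommutation of the $A_i$) then forces $\langle A_i\rangle=r_i$ for any state in the $+1$-eigenspace of $\mathcal{A}$. Since the expectation values of the generators $G\cup\{\mathcal{A}\}$ induce the expectation values of all of $\nonconterms$ (as established above), $\rho$ reproduces $(\vec{q},\vec{r})$ exactly.

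The core of the argument is the vanishing of $\langle P\rangle_\rho=\mathrm{Tr}(P\,\Pi_{\mathcal{C}})/\dim\mathcal{C}$ for each $P\in\extraterms$. Expanding $\Pi_{\mathcal{C}}$ produces a sum of terms of the form $h$ and $h\mathcal{A}=\sum_i r_i\,hA_i$, where $h$ ranges over the (signed) Pauli group generated by $\{q_jG_j\}$. By trace-orthogonality of Pauli operators, $\mathrm{Tr}(Ph)$ and $\mathrm{Tr}(PhA_i)$ can be nonzero only if, up to sign, either (i) $P$ equals a product of generators in $G$, or (ii) $P=gA_i$ for some $g$ in the group generated by $G$ and some clique representative $A_i$. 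In case (i) the value of $P$ is the fixed sign $\prod q_j$; in case (ii) the model assigns $P$ the value $\langle g\rangle\,r_i$, again a fixed function of $(\vec{q},\vec{r})$. In either case the value of $P$ is \emph{inferred} from $\nonconterms$, so closure under inference (\cref{closure_under_inference_within}) forces $P\in\nonconterms$, contradicting $P\in\extraterms$. Hence no such $P$ contributes and $\langle P\rangle_\rho=0$.

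I expect the main obstacle to lie in case (ii): unlike the generators $G_j$, the clique representatives $A_i$ carry the soft values $r_i\in[-1,1]$ rather than definite $\pm1$ values, so one must argue carefully that a coset element $gA_i$ nonetheless has a value that is determined by, hence inferred from, the noncontextual data, and that the closure condition applies to such soft-valued inferences and not only to products of definite-valued generators. Making this step precise—identifying exactly which Pauli operators have model-assigned expectation values fixed by $(\vec{q},\vec{r})$—is the crux, and it is precisely where the hypothesis that $\nonconterms$ is closed under inference within $\allterms$ is indispensable.
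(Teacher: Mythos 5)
Your proof is correct, and it takes a genuinely different route from the paper's. The paper's proof (\cref{proofs}) fixes one $P\in\extraterms$ at a time and argues by cases on its commutation relations with $G$ and the $A_i$: terms anticommuting with some $G_j$ have zero expectation in every consistent state; terms commuting with all of $G$ and all the $A_i$ are either excluded by closure under inference or unconstrained; and in the hard case ($P$ commutes with $G$ but anticommutes with only some of the $A_i$) it splits $\mathcal{A}=K+L$ with $[K,P]=0$, $\{L,P\}=\{K,L\}=0$, and builds a superposition of $PK$-eigenstates inside the $+1$-eigenspace of $\mathcal{A}$ for which $\langle P\rangle$ equals its own negative, the external ingredient being the uncertainty relation of \cref{blochball}. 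You instead exhibit one explicit state, the maximally mixed state on the contextual subspace, and kill all of $\extraterms$ at once by Pauli trace orthogonality plus closure under inference. This buys something real: the theorem asserts a \emph{single} state zeroing every operator in $\extraterms$ simultaneously, which is the form that \cref{no_qc_no_exp,noncon_variational} actually use; your $\rho$ delivers that quantifier order directly, whereas the paper's Case 2.2 and Case 3 construct a possibly different consistent state for each $P$. Moreover, the ``crux'' you flag in your case (ii) is not actually an obstacle: closure under inference, as formalized in \cref{closure_under_inference_def} and \eqref{c_u_i_within}, is a purely algebraic condition --- $\overline{\nonconterms}$ is closed under products of commuting pairs, and values (soft or hard) never enter. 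Concretely, every generator of $\overline{\mathcal{Z}}$ (an element of $\mathcal{Z}$, or a product $AA'$ with $A,A'$ in the same clique) is a product of a commuting pair from $\nonconterms$ and commutes with every $A_i$; hence any $g$ in the group generated by $G$ lies in $\overline{\nonconterms}$ and commutes with $A_i$, so $gA_i\in\overline{\nonconterms}$, and then $P=\pm gA_i\in\allterms$ forces $P\in\overline{\nonconterms}\cap\allterms=\nonconterms$, contradicting $P\in\extraterms$, regardless of the values $r_i$ (your case (i) is the same argument without the factor $A_i$). What the paper's case analysis buys in exchange is finer information that the method itself relies on in \cref{eigenspace_search}: it separates the terms whose expectations vanish in \emph{every} consistent state (those anticommuting with some $G_j$, or with all of the $A_i$) from those that are merely unconstrained --- exactly the terms that can admit a nonzero quantum correction.
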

\noindent
The proof may be found in \cref{proofs}, and follows from the fact that $\nonconterms$ is closed under inference: no value of any operator in $\extraterms$ can be inferred from the values of operators in the noncontextual part of the Hamiltonian \cite{kirby20a}.
\cref{null_exp_thm} has two useful corollaries.
\begin{corollary}
\label{no_qc_no_exp} 
    If the noncontextual states of $\nonconham$ uniquely identify quantum states, then for any noncontextual state the expectation value of every term in $\extraham$ is zero, i.e., no quantum correction is possible.
\end{corollary}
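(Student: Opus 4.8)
The plan is to derive the corollary almost immediately from \cref{null_exp_thm}, by combining its existence statement with the uniqueness hypothesis. First I would make explicit what ``uniquely identify quantum states'' means: a noncontextual state $(\vec{q},\vec{r})$ corresponds to the stabilizer subspace fixed by the operators $q_jG_j$ and by $\mathcal{A}$, and the quantum states consistent with it are exactly the density matrices supported on that subspace. The hypothesis is precisely that this subspace is one-dimensional, so that there is a single consistent quantum state rather than a family; equivalently, the stabilizers leave no free qubits for a variational correction.

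Next I would apply \cref{null_exp_thm} directly: since $\nonconterms$ is closed under inference within $\allterms$, the theorem guarantees that for the given noncontextual state there exists a consistent quantum state on which every operator in $\extraterms$ has vanishing expectation value. But the consistency condition already pins down a one-dimensional subspace, so this existentially-guaranteed state must coincide with the unique consistent state. Hence that unique state assigns expectation value zero to every operator in $\extraterms$.

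Finally, since the Pauli terms of $\extraham$ are exactly the operators in $\extraterms$, linearity of the expectation value gives $\langle\extraham\rangle=0$ on the unique consistent quantum state. As the quantum correction is defined by minimizing $\langle\extraham\rangle$ over consistent states, and here there is only one such state, the correction is forced to vanish, i.e.\ no quantum correction is possible.

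I do not expect a substantive obstacle, since the corollary is essentially the conjunction of the theorem's existence claim with the uniqueness hypothesis. The only point requiring care is the translation in the first step---recognizing that ``uniquely identifies a quantum state'' is the statement that the consistent stabilizer subspace is one-dimensional---after which the remaining steps are immediate.
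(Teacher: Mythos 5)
Your proof is correct and follows essentially the same route as the paper's: invoke \cref{null_exp_thm} to obtain a consistent quantum state on which every operator in $\extraterms$ has zero expectation value, then use the uniqueness hypothesis to conclude that this state is the only consistent one, forcing the quantum correction to vanish. The extra elaboration you add (identifying uniqueness with one-dimensionality of the stabilized subspace, and the linearity step for $\langle\extraham\rangle$) is sound but not needed beyond what the paper's two-line argument already contains.
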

\begin{proof}
    By \cref{null_exp_thm}, there exists a quantum state $|\psi\rangle$ consistent with the noncontextual state for which the expectation value of every term not in the noncontextual part is zero.
    Thus if the noncontextual state uniquely identifies a quantum state, it must be $|\psi\rangle$.
\end{proof}

\begin{corollary}
\label{noncon_variational}
    As an approximation method for a general Hamiltonian, a quasi-quantized model of the noncontextual part is variational, i.e., the resulting approximate ground state energies are lower-bounded by the true ground state energy.
    It remains variational when the quantum correction is included.
\end{corollary}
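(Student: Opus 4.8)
The plan is to reduce both assertions to the ordinary variational principle, $\langle\psi|\ham|\psi\rangle\geq E_0$ for every normalized quantum state $|\psi\rangle$, where $E_0$ denotes the true ground-state energy of $\ham$. The bridge from the classical phase-space description back to genuine quantum states is exactly \cref{null_exp_thm}, and this is where the real work is hidden.

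For the first claim (the noncontextual approximation with no quantum correction), I would take an arbitrary noncontextual state $(\vec{q},\vec{r})$. By \cref{null_exp_thm} there is a quantum state $|\psi\rangle$ consistent with $(\vec{q},\vec{r})$ for which every operator in $\extraterms$ has vanishing expectation value; summing over the terms of $\extraham$ then gives $\langle\psi|\extraham|\psi\rangle=0$. Consistency means that $|\psi\rangle$ reproduces the expectation value of every term in $\nonconterms$, so it reproduces the induced value $\langle\nonconham\rangle_{(\vec{q},\vec{r})}$. Hence $\langle\psi|\ham|\psi\rangle=\langle\nonconham\rangle_{(\vec{q},\vec{r})}$, and the variational principle forces $\langle\nonconham\rangle_{(\vec{q},\vec{r})}\geq E_0$. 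Since this holds for every noncontextual state, it holds in particular for the minimizing one, so the noncontextual ground-state energy is bounded below by $E_0$.

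For the second claim, I would observe that the quantum correction is by construction a minimization of $\langle\extraham\rangle$ over the quantum states consistent with the fixed noncontextual ground state. Each such state reproduces the same value of $\langle\nonconham\rangle$, so the corrected energy equals the minimum of $\langle\psi|\ham|\psi\rangle$ taken over this restricted family of honest quantum states. A minimum over any subfamily of quantum states is still bounded below by $E_0$, which gives the claim.

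The only nontrivial step is the first one: everything rests on realizing the noncontextual expectation values by an actual quantum state while simultaneously sending the $\extraterms$ expectations to zero, rather than merely by an abstract epistemic state on the phase space. That is precisely what \cref{null_exp_thm} supplies, and its hypothesis---closure of $\nonconterms$ under inference within $\allterms$---is what guarantees that fixing the $\extraterms$ values to zero creates no inconsistency with the noncontextual data. Once this bridge is granted, both inequalities collapse to the textbook variational bound, and the second claim is essentially immediate because the correction is a constrained minimization of $\langle\ham\rangle$ over genuine quantum states.
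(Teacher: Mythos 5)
Your proof is correct and takes essentially the same route as the paper's: both use \cref{null_exp_thm} to exhibit a genuine quantum state that reproduces the noncontextual energy while giving $\langle\extraham\rangle=0$, then invoke the standard variational principle, and both treat the corrected energy as a constrained minimization of $\langle\ham\rangle$ over honest quantum states consistent with the noncontextual ground state. Your write-up is somewhat more explicit (quantifying over all noncontextual states before specializing to the minimizer), but the substance is identical to the paper's argument.
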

\begin{proof}
    From \cref{null_exp_thm}, it follows that there exists a ground state of the noncontextual part for which the expectation value of every other term in the Hamiltonian is zero.
    Hence, the ground state energy of the noncontextual part is a possible expectation value for the energy of the full Hamiltonian, so it is lower-bounded by the ground state energy of the full Hamiltonian.
    To compute the quantum correction we minimize the expectation value of the full Hamiltonian over quantum states consistent with a given noncontextual state. This produces a variational estimate of the energy with the contribution from $\nonconham$ given by the noncontextual state. 
\end{proof}

\subsection{Mapping a contextual subspace to a stabilizer subspace}
\label{diag_map}

We now show how to map the contextual subspace corresponding to the noncontextual ground state to a subspace stabilized by single-qubit $Z$ operators. To achieve this goal we rotate the $G_j$ and subsequently the single operator $\mathcal{A}$ to single-qubit $Z$ operators.

The number of $G_j$ is some $M<n$, as discussed in \cite{kirby20a}. Therefore, the $G_j$ can be mapped to single-qubit $Z$ operators by a sequence of at most $2M$ $\frac{\pi}{2}$-rotations\footnote{See for example the solution to problem 3 in \cite{yan12a}.} (for completeness, we provide a constructive proof as \cref{diagonalizecommutinglemma} in \cref{proofs}). 
These rotations are Clifford operators, so they map the remaining Pauli operators in the Hamiltonian back to single Pauli operators while preserving their commutation relations.
Let $D$ denote the composition of these rotations, and let $H'\equiv DHD^\dagger$ be the rotated Hamiltonian.
After applying $D$, the noncontextual generators $G_j$ have been mapped to single-qubit $Z$ operators $G_j'$.
Without loss of generality, since we have already found the noncontextual ground state at this point we can choose the signs of the $G_j'$ such that they all have eigenvalue $+1$ in the noncontextual ground state, and thus stabilize it.
We refer to this basis as the ``rotated basis.''

Once $D$ has been applied, we map $\mathcal{A}'=D\mathcal{A}D^\dagger$ to a single-qubit $Z$ operator as well. $\mathcal{A}'$ is a normalized linear combination of the anticommuting Pauli operators $A_i'=DA_iD^\dagger$, as in \eqref{A_def}, so we use the sequence of rotations employed in unitary partitioning~\cite{izmaylov19a,zhao20a}. The result is a sequence of $N-1$ rotations generated by 
products of pairs of the $A_i'$; we denote by $R$ the composition of these rotations, and the result of applying it to $\mathcal{A}'$ is
\begin{equation}
    R^\dagger\mathcal{A}'R=A_1',
\end{equation}
where $A_1'$ is a single Pauli operator.

The rotations forming $R$ are generated by products of the $A_i'$, and the $A_i'$ commute with the operators $G_j'$, so $R$ commutes with the operators $G_j'$. Thus $A_1'$ commutes with and is independent of the operators in $G'$, so since it is a single Pauli operator, we can use \cref{diagonalizecommutinglemma} to map it to a single-qubit $Z$ operator as well, without disturbing the operators in $G'$. Let $D_{\mathcal{A}'}$ denote the full rotation that maps $\mathcal{A}'$ to a single-qubit $Z$ operator $A''$.

\subsection{Restricting the Hamiltonian to a contextual subspace}
\label{eigenspace_search}

In the rotated basis, we will restrict the Hamiltonian to the subspace stabilized by the noncontextual generators $G_j'$.
We will then obtain the quantum correction by minimizing the expectation value of this restricted Hamiltonian over $+1$-eigenvectors of the remaining noncontextual generator $\mathcal{A}'$.

Let $\mathcal{H}_1$ denote the Hilbert space of the $n_1$ qubits acted on by the single-qubit Pauli $Z$ operators $G_j'$, and let $\mathcal{H}_2$ denote the Hilbert space of the remaining $n_2$ qubits. Thus the full Hilbert space is $\mathcal{H}=\mathcal{H}_1\otimes\mathcal{H}_2$ and the total number of qubits is $n=n_1+n_2$. 
The contextual part of the Hamiltonian in the rotated basis is:
\begin{equation}\label{rbextra}
    \extraham'=\sum_{P\in\extraterms'}h_PP.
\end{equation}
The set of Pauli terms in $\extraham'$ is $\extraterms'$ and terms in $\extraterms'$ in general act on both of the subspaces $\mathcal{H}_1$ and $\mathcal{H}_2$.

We can write the terms $P$ in~\eqref{rbextra} as
\begin{equation}
    P=P_1\otimes P_2,
\end{equation}
where $P_1$ is a Pauli operator acting on $\mathcal{H}_1$ and $P_2$ is a Pauli operator acting on $\mathcal{H}_2$. $P$ commutes with an element of $G'$ if and only if $P_1\otimes\mathds{1}_{\mathcal{H}_2}$ does (where $\mathds{1}_{\mathcal{H}_2}$ denotes the identity operator acting on $\mathcal{H}_2$), since the operators in $G'$ act only on $\mathcal{H}_1$. Since the noncontextual state corresponds to a subspace stabilized by $G'$, if $P$ anticommutes with any element of $G'$, then its expectation value in the noncontextual state is zero. Hence, any $P$ that admits a quantum correction must commute with all elements of $G'$, and thus $P_1\otimes\mathds{1}_{\mathcal{H}_2}$ must as well. The elements of $G'$ comprise all single-qubit $Z$ operators acting in $\mathcal{H}_1$, so $P_1$ must be a product of such operators. Hence the expectation value of $P_1$ is some $p_1=\pm1$, determined by the noncontextual state.

Let $|\psi_{(\vec{q},\vec{r})}\rangle$ be any quantum state consistent with the noncontextual state $(\vec{q},\vec{r})$. The action of any term $P$ that admits a quantum correction on $|\psi_{(\vec{q},\vec{r})}\rangle$ has the form
\begin{equation}
\begin{split}
    P|\psi_{(\vec{q},\vec{r})}\rangle&=\big(P_1\otimes P_2\big)|\psi_{(\vec{q},\vec{r})}\rangle\\
    &=p_1\big(\mathds{1}_{\mathcal{H}_1}\otimes P_2\big)|\psi_{(\vec{q},\vec{r})}\rangle.
\end{split}
\end{equation}
Therefore, if we denote by $\extraham'|_{(\vec{q},\vec{r})}$ the restriction of $\extraham'$ to its action on the noncontextual ground state $(\vec{q},\vec{r})$,
\begin{equation}
\begin{split}
    \extraham'|_{(\vec{q},\vec{r})}&=\sum_{\substack{P\in\extraterms'\\\text{s.t.}~[P,G'_i]=0\\\forall G'_i\in G'}}p_1 h_P\mathds{1}_{\mathcal{H}_1}\otimes P_2\\
    &=\mathds{1}_{\mathcal{H}_1}\otimes\extraham'|_{\mathcal{H}_2}\color{white}{\bigg)}
\end{split}
\end{equation}
for
\begin{equation}
    \extraham'|_{\mathcal{H}_2}\equiv\sum_{\substack{P\in\extraterms'\\\text{s.t.}~[P',G'_i]=0\\
    \forall G'_i\in G'}}p_1h_PP_2.
\end{equation}
This is a Hamiltonian on $n_2$ qubits, for $n_2$ given by
\begin{equation}
\label{n_qubits_qc}
    n_2=n-|G|,
\end{equation}
where $|G|$ is the number of noncontextual generators $G_j$.
Furthermore, if terms $P\in\extraterms'$ are distinct only on their tensor factors $P_1$, the remaining operators $P_2$ in $\extraham'|_{\mathcal{H}_2}$ will be identical. Also, any terms that anticommute with any of the noncontextual generators $G_j$ are dropped entirely (since their expectation values are zero). Thus, the restricted Hamiltonian $\extraham'|_{\mathcal{H}_2}$ may contain fewer than $|\extraterms'|$ terms. This is illustrated in \cref{fig:terms_csvqe_vs_full_practical}, in \cref{cssvqe_examples}.

\subsection{Optimizing within a noncontextual subspace}

To obtain the quantum correction, we perform $n_2$-qubit VQE on the restricted Hamiltonian $\extraham'|_{\mathcal{H}_2}$ within the contextual subspace. The contextual subspace is the subspace of $\mathcal{H}_2$ that forms the $+1$-eigenspace of $\mathcal{A}'$, the remaining noncontextual generator in the rotated basis.
To search within the $+1$-eigenspace of $\mathcal{A}'$, we can prepare ans\"atze in the $+1$-eigenspace of $A''$ (a single-qubit $Z$ operator), and then apply the inverse of $D_{\mathcal{A}'}$.
This guarantees that every ansatz state is consistent with the noncontextual ground state $(\vec{q},\vec{r})$.

Note that we only explicitly restrict the rotated Hamiltonian to the subspace stabilized by $G_j'$, whereas we restrict to the $+1$-eigenbasis of $\mathcal{A}'$ at the level of the ansatz. This is because although the operation $D_{\mathcal{A}'}$ that diagonalizes $\mathcal{A}'$ can be efficiently implemented on a quantum computer, it is not a Clifford operation.  Conjugating the Hamiltonian with $D_{\mathcal{A}'}$ can increase the number of terms by a factor of $\Theta(2^N)$, where $N$ is the number of cliques. Thus if $N$ is small, we could classically compute the Hamiltonian restricted to the $+1$ eigenbasis of $\mathcal{A}'$ and then perform VQE on this Hamiltonian. This would save one qubit and permit an unconstrained search for the quantum correction, but since $N$ can in principle scale as $\Theta(n)$ this approach will not be efficient in general.

\subsection{Example}
\label{examples}

As an example, we construct a Hamiltonian for which most of the terms are included in the noncontextual part. Let $\allterms=\nonconterms\cup\extraterms$, where
\begin{equation}
\label{qc_example_terms}
\begin{split}
    \nonconterms = \{ZII,&IXI,IYI,IZX,IZY,IZZ,\\
    &ZXI,ZYI,ZZX,ZZY,ZZZ\},\\
    \extraterms = \{IIX,&IIY,IIZ\}.
\end{split}
\end{equation}
The set of terms $\nonconterms$ is noncontextual, partitioning into $\mathcal{Z}=\{ZII\}$ (recall that $\mathcal{Z}$ is the set of terms that commute with all others), and five cliques, $\{IXI,ZXI\}$, $\{IYI,ZYI\}$, $\{IZX,ZZX\}$, $\{IZY,ZZY\}$, and $\{IZZ,ZZZ\}$.
Thus we may choose
\begin{equation}
\begin{split}
    &A_1=IXI,~A_2=IYI,~A_3=IZX,\\
    &A_4=IZY,~A_5=IZZ.
\end{split}
\end{equation}
The extra terms $\extraterms$ all commute with $\mathcal{Z}$.
In this case, $G=\mathcal{Z}$ since $\mathcal{Z}$ contains only one operator, and this operator is already a single-qubit $Z$ operator, so $D$ is the identity.

Thus $\mathcal{H}_2$ is the Hilbert space of the second two qubits, so for
\begin{equation}
    \extraham'=\extraham=h_{IIX}IIX+h_{IIY}IIY+h_{IIZ}IIZ
\end{equation}
for some coefficients $h_{IIX},h_{IIY},h_{IIZ}$, the restriction to $\mathcal{H}_2$ is
\begin{equation}
    \extraham'|_{\mathcal{H}_2}=\extraham=h_{IIX}IX+h_{IIY}IY+h_{IIZ}IZ.
\end{equation}
We also have
\begin{equation}
    \mathcal{A}'=\mathcal{A}=r_1A_1+r_2A_2+r_3A_3+r_4A_4+r_5A_5
\end{equation}
for some unit vector $\vec{r}$; the restriction of $\mathcal{A}'$ to $\mathcal{H}_2$ is thus
\begin{equation}
    \mathcal{A}'|_{\mathcal{H}_2}=r_1XI+r_2YI+r_3ZX+r_4ZY+r_5ZZ,
\end{equation}
so $D_{\mathcal{A}'}$ is the rotation that maps this to a single-qubit $Z$ operator, as described in \cref{diag_map}.
We can choose
\begin{equation}
    D_{\mathcal{A}'}\mathcal{A}'|_2D_{\mathcal{A}'}^\dagger=ZI;
\end{equation}
in this case, for an ansatz we may prepare any state whose value is $|0\rangle$ for the first qubit in $\mathcal{H}_2$, and then apply $D_{\mathcal{A}'}^\dagger$ to this state.

Thus, we reduce an initial Hamiltonian on three qubits to a noncontextual approximation and a quantum correction that may be implemented on a two-qubit quantum processor.

To evaluate the performance of the resulting approximations, we generated 10000 Hamiltonians with the terms \eqref{qc_example_terms} by choosing coefficients for them uniformly at random from $[-1,1]$.
The resulting fractional errors in the ground state energies are plotted in \cref{qc_example_ground_states}; the average fractional error is 0.257 for the noncontextual approximation alone, and 0.0268 when the quantum correction is included.
The quantum corrections were simulated classically by directly evaluating the lowest eigenvalues of the Hamiltonians restricted to the noncontextual ground states.

\begin{figure}
\centering
\includegraphics[width=\columnwidth]{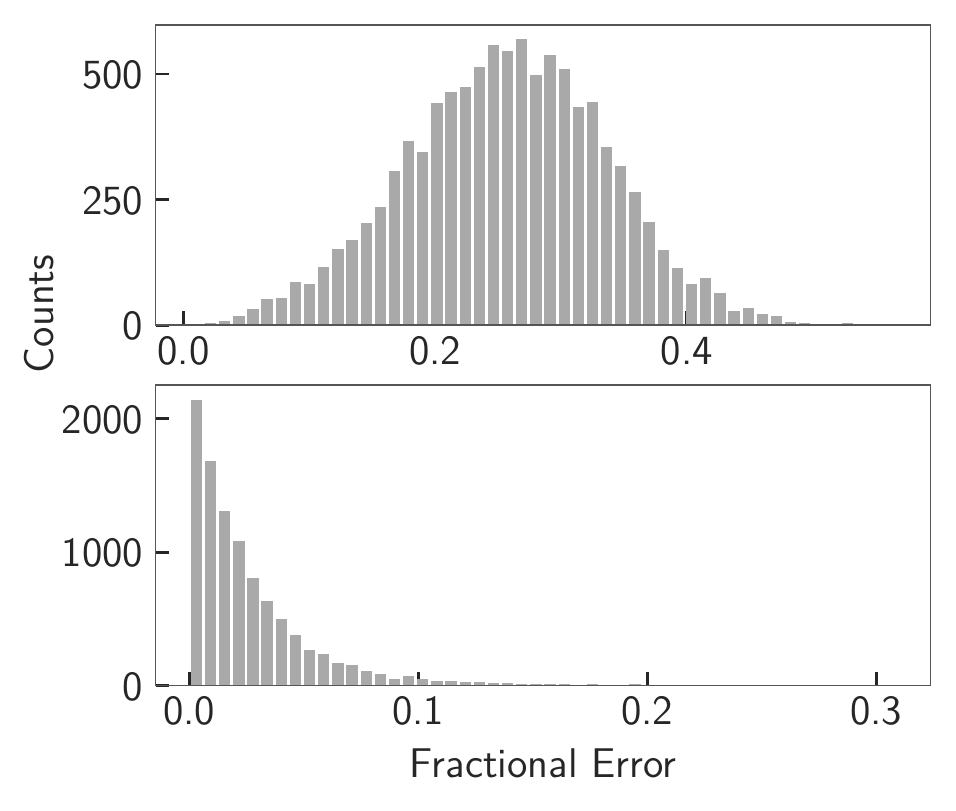}
\caption{Comparison of fractional errors in the noncontextual approximation of the ground state energy (upper plot), and in the noncontextual approximation with quantum correction (lower plot). The histogram points were generated by $10000$ Hamiltonians with terms \eqref{qc_example_terms} and uniformly random coefficients in $[-1,1]$. The mean fractional error without quantum correction is $0.257$, and the mean fractional error with quantum correction is $0.0268$.\label{qc_example_ground_states}}
\end{figure}

\section{Contextual subspace VQE}
\label{c_subspace_vqe}

The quantum correction to noncontextual approximations discussed in \cref{quantum_part} allows us to use limited quantum resources to improve a classical simulation result.
In this section we explain how we can systematically step back from the original noncontextual approximation in order to enlarge the contextual subspace, thus improving the overall accuracy of the approximation by using more quantum resources.
This provides a parameter that can be specified based on the quantum resources available, taking us from the optimal noncontextual approximation at one extreme to full VQE at the other.
We call this method \emph{contextual subspace VQE}.

\subsection{Method}
\label{CSVQE_method}

We begin with a Hamiltonian $\ham$ whose noncontextual approximation is $\nonconham$. As discussed above, the noncontextual ground state corresponds to a joint eigenspace of the noncontextual generators $G\cup\{\mathcal{A}\}$. We can trade accuracy of the noncontextual approximation for an improved quantum correction by decreasing the size of $G\cup\{\mathcal{A}\}$, which increases the dimension of the contextual subspaces. We accomplish this by decreasing the size of $G$, the set of generators for the commuting part of $\nonconham$. Since the number of qubits used in the quantum correction procedure is the total number of qubits minus the number of generators in $G$ (see \eqref{n_qubits_qc}), reducing the size of $G$ increases the dimension of the search space for the quantum processor.

We work in the rotated basis, as in \cref{quantum_part}. In this basis, we select some subset of the noncontextual generators $G_j'$, and remove all terms generated by them from the noncontextual part. Since the $G_j'$ are single-qubit $Z$ operators, this means that for each generator to be dropped we remove from the noncontextual part all terms containing the corresponding single-qubit $Z$ operator as a tensor factor. All the terms thus removed should be added to the quantum correction Hamiltonian $\extraham'$ (as in \cref{eigenspace_search}). We now implement the quantum correction on this expanded $\extraham'$, keeping the same noncontextual ground state that we began with, but only applying its value assignments to the generators that remain in the noncontextual part.

The new noncontextual approximation on its own will in general be worse than the original noncontextual approximation. However, after including the new quantum correction the overall approximation cannot be worse, and will in general be better.
This is because the values assigned in the original noncontextual approximation and quantum correction are still consistent with the noncontextual ground state, so quantum states that obtain those values are included in the new quantum search space.
Thus in the worst case the new approximation will only recover the original approximation.
If the new quantum correction is nonzero for any additional terms, the new approximation will be strictly better than the original approximation.
In the limit where we remove all terms from the noncontextual part and simulate them on the quantum computer, there will be no noncontextual approximation left, and we will have recovered full VQE.

The additional terms that can have nonzero quantum corrections after the removal procedure are those that anticommute with any of the generators $G_j'$ that were removed, but commute with the remaining generators.
These terms were previously restricted to null expectation values only because the noncontextual state was required to be a joint eigenstate of the removed generators, so when that is no longer enforced their expectation values can vary.
Therefore, we can choose which subset of the generators to remove based on which will permit the optimal quantum correction.

Note that classically simulating the noncontextual part of the Hamiltonian is NP-complete, so in worst cases the classical simulation part of CS-VQE will not perform well \cite{kirby20a}.
However, worst case Hamiltonians for standard VQE are QMA-complete, meaning that a similar argument applies to VQE in general.
Hence, in both cases we are interested in heuristic performance for specific Hamiltonians of interest, rather than worst cases.
Framed in this way, what CS-VQE does is take standard VQE, which is a heuristic for an optimization problem over a set of parameters for a quantum circuit, and transform it into two heuristics (one classical and one quantum) for two smaller optimization problems.
In practice, we have found that a combination of Monte-Carlo and gradient descent methods works well for the classical part of the algorithm, but continuing to optimize this is a topic for future work.

\subsection{Applications}
\label{cssvqe_examples}

\begin{figure*}
\centering
\includegraphics[width=7in]{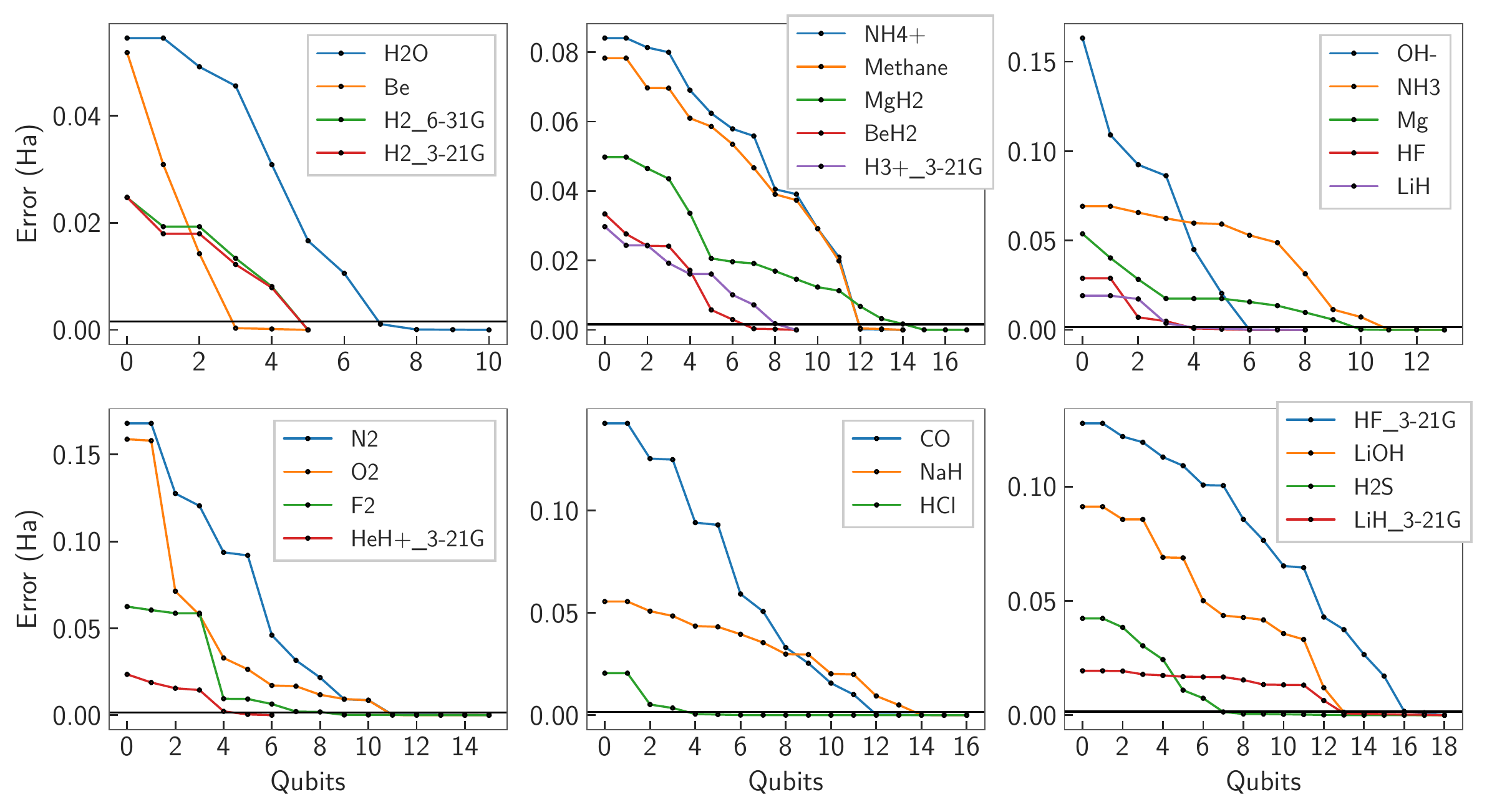}
\caption{
CS-VQE approximation errors versus number of qubits used on the quantum computer, for tapered molecular Hamiltonians. All Hamiltonians whose curves overlap in the region below chemical accuracy have the same total numbers of qubits. The solid black lines indicate chemical accuracy. Within each subplot, the ordering of the legend matches the vertical ordering of the leftmost points in the curves.
\label{errors_vs_qubits_pract}
}
\end{figure*}

\begin{figure}
\centering
\includegraphics[width=\columnwidth]{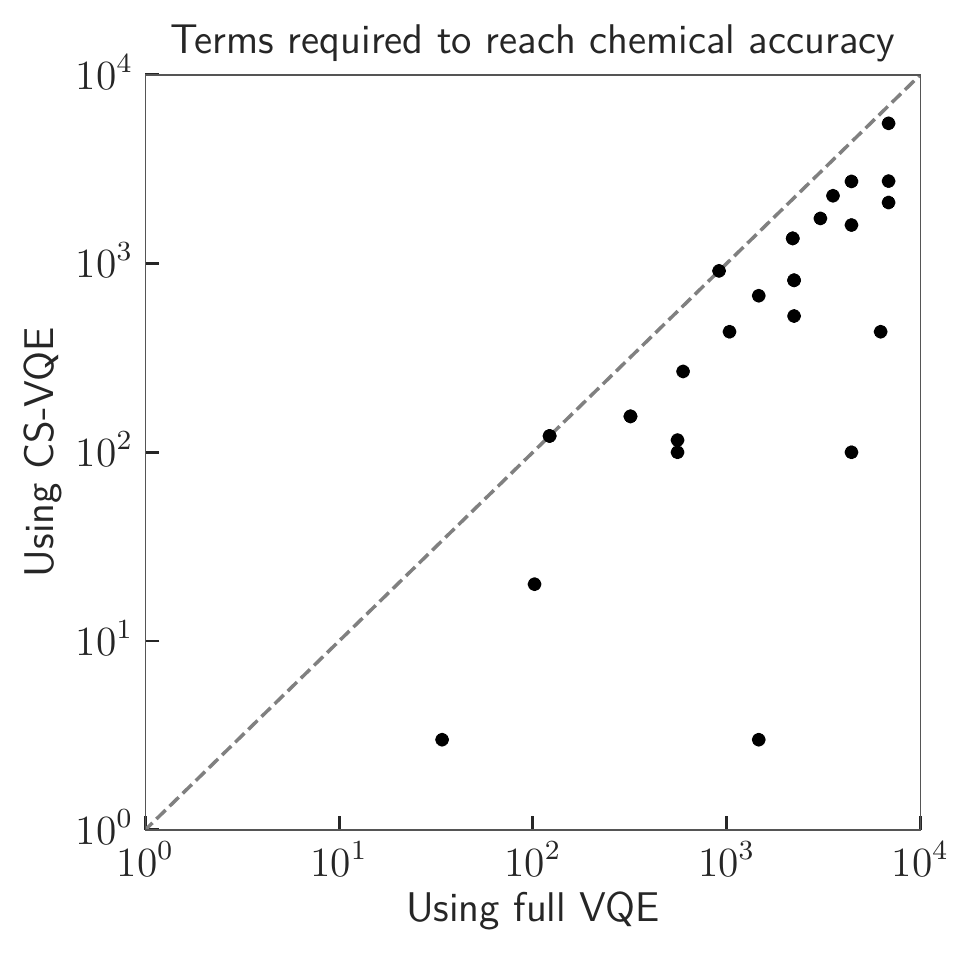}
\caption{
Number of terms simulated on the quantum processor required to reach chemical accuracy using CS-VQE versus using full VQE.
The dashed line marks equality. All points represent either one, two, or three Hamiltonians.
\label{fig:terms_csvqe_vs_full_practical}}
\end{figure}

We tested CS-VQE on a set of electronic structure Hamiltonians in the Jordan-Wigner mapping \cite{jordan28a}. In order to distinguish CS-VQE from qubit tapering, we first tapered the Hamiltonians using symmetries as in \cite{bravyi17a,setia20a}, then implemented CS-VQE in order to remove even more qubits. The initial noncontextual approximation Hamiltonians were chosen via a greedy classical algorithm, as described in \cite{kirby20a}. This algorithm runs in $O(N^5)$ time for an $N$-term Hamiltonian, since testing a particular Hamiltonian for noncontextuality takes $O(N^3)$ time \cite{kirby19a}, and a greedy algorithm that adds optimal terms one at a time requires $O(N^2)$ steps.
This method is not optimal, but is efficient.
The quantum parts of the procedures were simulated classically by directly evaluating the lowest eigenvalues of the quantum correction Hamiltonians. The results are given in \cref{errors_vs_qubits_pract}, which shows the overall CS-VQE approximation errors versus the number of qubits used on the quantum computer, and in \cref{fig:terms_csvqe_vs_full_practical}, which shows the number of terms that must be simulated on the quantum computer in order to reach chemical accuracy using CS-VQE.
Our code is available on GitHub\footnote{Source code: \url{https://github.com/wmkirby1/ContextualSubspaceVQE}}, and may be used to reproduce our results or to apply CS-VQE to new Hamiltonians of the reader's choosing.

As noted at the end of \cref{CSVQE_method}, CS-VQE is sensitive to the order in which the qubits are moved from the noncontextual approximation to the quantum processor. In the calculations to obtain Figs.~\ref{errors_vs_qubits_pract} and \ref{fig:terms_csvqe_vs_full_practical}, we used a heuristic that begins with the noncontextual approximation, then adds qubits to the quantum correction two at a time, greedily choosing each pair to maximize the decrease in the ground state energy estimate.
This method is informed by the structure of the noncontextual and contextual parts of the molecular Hamiltonians, and performed best out of the heuristics we tried that can be implemented efficiently without performing full VQE.
Details of the implementation of the heuristic are given in \cref{applications_app}.

This heuristic involves running CS-VQE repeatedly, since for sufficiently large applications one would have to use the quantum processor to compute the quantum corrections on the way to choosing the set of qubits for the final quantum correction. However, these preliminary computations would only be necessary once the number of qubits chosen becomes unfeasible for classical simulation, and from \cref{fig:terms_csvqe_vs_full_practical} we see that the number of terms required to reach chemical accuracy can be many times smaller than the number of terms required to implement full VQE.
Therefore, even the repeated runs of CS-VQE required for this heuristic can require fewer measurements overall than full VQE, and of course they also require fewer qubits.

Alternatively, one could use a heuristic to determine the order without evaluating energies at all. However, all variants of this that we tried had substantially worse performance than the heuristic discussed above, so we suggest using that heuristic for real applications. We also tested an inefficient ``optimal'' heuristic that begins from full VQE and moves qubits to the noncontextual approximation one at a time, greedily minimizing the error penalty for each.
Actually implementing this heuristic is even more costly than full VQE, but it did identify the optimal qubit orderings in cases small enough for us to find these by brute-force search. The first heuristic discussed above performed nearly as well as the ``optimal'' heuristic, requiring the same number of qubits to reach chemical accuracy in most cases and only one extra in the remaining few cases. Details of all of the heuristics and their relative performance are discussed in \cref{applications_app}.

\section{Conclusion}
\label{conclusion}

In this paper, we showed how to use a quantum computer to obtain a correction to a noncontextual approximation of a ground state energy. We then showed how to adjust the number of qubits used on the quantum computer in order to increase the accuracy of the hybrid approximation. This method, contextual subspace VQE or CS-VQE, is a true hybrid quantum-classical algorithm, in which the quantum resources used may be set to match whatever resources are available, and the classical approximation algorithm accounts for the remainder. The method is approximate, but variational, as is VQE itself. Exact methods will only achieve approximate results on NISQ devices due to their noisy character. CS-VQE allows the quantum resources used to be increased systematically until the desired precision is achieved, if possible.

Standard VQE is a heuristic algorithm: there are no analytic characterizations of its performance for general Hamiltonians, or even for special classes like electronic structure Hamiltonians, upon scaling the system size.
This is also true for CS-VQE: its performance is sensitive to the specific problem to which it is applied.
We do not analytically characterize the errors as a function of the number of qubits used on the quantum processor.
However, the examples in \cref{cssvqe_examples} illustrate that CS-VQE performs well in many cases of interest going well beyond the scale of VQE implementations to date, so we hope that as the available quantum processors continue to grow, CS-VQE can be used to allow larger systems to be simulated using those processors.

The technique for restricting the quantum correction to the subspace consistent with the noncontextual ground state may appear reminiscent of using qubit tapering to exploit symmetries as described in \cite{bravyi17a,setia20a}.
However, in CS-VQE the symmetries are intrinsic to the noncontextual ground state, rather than to the Hamiltonian (as in \cite{bravyi17a,setia20a}), and are thus under the experimenter's control. We illustrated this point in \cref{cssvqe_examples} by applying CS-VQE to Hamiltonians that were already tapered using the methods of \cite{bravyi17a,setia20a}; using CS-VQE we can eliminate additional qubits at will.

In this paper we did not explore how to implement ans\"atze for the restricted VQE instance used by CS-VQE, instead finding the exact ground states of the contextual parts.
However, standard ansatz classes, like unitary coupled-cluster (UCC) for electronic structure Hamiltonians \cite{yung14a,mcclean16a,romero18a}, can be transformed into ans\"atze for CS-VQE by projecting the gates onto the contextual subspace, just as the contextual part of the Hamiltonian is restricted to the contextual subspace.
Detailed study of this is a topic for future research.

One concern for standard VQE as well as for CS-VQE is that the ansatz may suffer from the \emph{barren plateau problem} \cite{mcclean18a,uvarov21a,cerezo21a}, where the gradient of the cost function (in this case expected energy) vanishes exponentially with the system size.
It is hoped that for standard VQE, using physically-motivated ans\"atze like UCC may avoid the barren plateau problem, so since we can use projections of the same ans\"atze for CS-VQE, this same hope transfers to our case.
However, even physically-motivated ans\"atze may be subject to \emph{noise-induced barren plateaus} \cite{wang20a}: to the best of our knowledge, all variational quantum algorithms have the potential to fail in this way, including CS-VQE.
Nonlinear optimization and its attendant problems, including barren plateaus, may be avoided by the use of quantum imaginary time evolution (QITE) or similar methods \cite{motta20a,mcardle19a}. In our case, QITE could be applied directly to the contextual part of the Hamiltonian.

It is possible that some of the qubit and term reductions we obtained using CS-VQE have explanations in terms of chemistry.
However, in such cases CS-VQE identifies and exploits such features using principles that are derived from the foundations of quantum mechanics, and are consequently agnostic any specific, high-level chemistry arguments. Identifying such chemical arguments would illustrate the role contextuality plays in chemistry, which would be of independent interest.

By using CS-VQE it is possible to reach chemical accuracy for ground state energies of numerous small molecules using many fewer qubits than would be required to implement full VQE on the tapered Hamitonians.
The number of terms and thus number of measurements required is also substantially reduced by using CS-VQE, since groups of terms become equivalent under the symmetry imposed by the noncontextual ground state. The number of measurements needed to obtain the quantum correction could be further reduced by the techniques described in~\cite{verteletskyi20a,yen20a,gokhale19a,izmaylov19a,zhao20a}. We leave this and other optimizations of the method to future work. Current VQE implementations are limited in both qubit count and number of measurements by the available hardware, so we expect CS-VQE to be of immediate practical value in accessing new molecular simulation applications on NISQ computers.

~
\begin{acknowledgements}
W. M. K. acknowledges support from the National Science Foundation, Grant No. DGE-1842474.
P. J. L. acknowledges support from the National Science Foundation, Grant No. PHY-1720395, and from Google Inc.
This work was supported by the NSF STAQ project (PHY-1818914).
\end{acknowledgements}

\bibliographystyle{apsrev4-1}
\bibliography{references.bib}

\appendix

\section{Proofs}
\label[appendix]{proofs}

We will use Lemma 1 from \cite{kirby20a}:
\begin{lemma}[Lemma 1 in \cite{kirby20a}]
    \label{blochball}
    Let $P_1,P_2,...,P_N$ be an anticommuting set of Pauli operators.
    For any unit vector $\vec{a}\in\mathbb{R}^N$, the operator $\sum_{i=1}^Na_iP_i$ has eigenvalues $\pm1$.
    From this it follows that for any state, $\sum_{i=1}^N\langle P_i\rangle^2\le1$.
\end{lemma}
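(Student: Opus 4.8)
The plan is to prove the two assertions in sequence: first that $\mathcal{A}\equiv\sum_{i=1}^N a_i P_i$ squares to the identity, which pins down its spectrum, and then to extract the inequality by choosing $\vec{a}$ optimally relative to a given state.

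First I would compute $\mathcal{A}^2$ directly. Expanding the product and separating diagonal from off-diagonal terms,
\begin{equation}
    \mathcal{A}^2 = \sum_{i,j=1}^N a_i a_j P_i P_j = \sum_{i=1}^N a_i^2 P_i^2 + \sum_{i<j} a_i a_j\big(P_i P_j + P_j P_i\big).
\end{equation}
Each $P_i$ is a Pauli operator, so $P_i^2=\mathds{1}$, while the anticommutation hypothesis gives $P_iP_j+P_jP_i=0$ for $i\ne j$. Hence every cross term vanishes and $\mathcal{A}^2=\big(\sum_i a_i^2\big)\mathds{1}=|\vec{a}|^2\mathds{1}=\mathds{1}$, using $|\vec{a}|=1$. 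Since $\mathcal{A}$ is a real linear combination of Hermitian operators it is itself Hermitian, and a Hermitian operator satisfying $\mathcal{A}^2=\mathds{1}$ has all eigenvalues in $\{+1,-1\}$, proving the first claim. One may additionally note that $\mathcal{A}$ is traceless, so both values actually occur, though this is not needed below.

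For the second claim I would use that eigenvalues $\pm1$ force $|\langle\mathcal{A}\rangle|\le1$ in every state. Fix a state and write $r_i\equiv\langle P_i\rangle$. If $\vec{r}=0$ the bound is trivial, so assume $\vec{r}\ne0$ and make the choice $\vec{a}=\vec{r}/|\vec{r}|$, which is a unit vector and hence admissible in the first claim. Then
\begin{equation}
    \langle\mathcal{A}\rangle = \sum_{i=1}^N a_i\langle P_i\rangle = \frac{1}{|\vec{r}|}\sum_{i=1}^N r_i^2 = |\vec{r}|,
\end{equation}
and combining with $|\langle\mathcal{A}\rangle|\le1$ yields $|\vec{r}|\le1$, i.e. $\sum_i\langle P_i\rangle^2\le1$, as required.

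I do not anticipate a serious obstacle: the entire content is the algebraic identity $\mathcal{A}^2=\mathds{1}$, which rests on the pairwise anticommutation collapsing every off-diagonal term. The only point needing a little care is the second part, where the bound must hold for \emph{all} states at once; this is handled by recognizing the inequality as essentially a Cauchy--Schwarz statement and aligning $\vec{a}$ with the expectation-value vector $\vec{r}$, after dispatching the degenerate case $\vec{r}=0$ separately.
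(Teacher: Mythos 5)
Your proof is correct: the paper itself states this lemma without proof, importing it as Lemma 1 of the cited reference \cite{kirby20a}, and your argument is exactly the standard one used there — anticommutation collapses the cross terms so that $\big(\sum_i a_i P_i\big)^2=\mathds{1}$, giving spectrum in $\{\pm1\}$, and then aligning $\vec{a}$ with the expectation-value vector $\vec{r}$ (after handling $\vec{r}=0$ separately) turns the bound $|\langle\sum_i a_iP_i\rangle|\le1$ into $\sum_i\langle P_i\rangle^2\le1$. Both steps, including the care you take that the inequality must hold for every state via the state-dependent choice of $\vec{a}$, are sound and complete.
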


\noindent
\textbf{Theorem \ref{null_exp_thm}}
\emph{
    Let $\allterms$ be a set of Pauli operators, and let $\nonconterms$ be a noncontextual subset that is closed under inference within $\allterms$ (see \cref{closure_under_inference_within}).
    Then for any noncontextual state $(\vec{q},\vec{r})$ as in \eqref{noncon_params_def} describing $\nonconterms$, there exists a quantum state consistent with $(\vec{q},\vec{r})$ (i.e., that gives the same expectation values for $\nonconterms$ as $(\vec{q},\vec{r})$) for which the expectation value of every operator in $\extraterms\equiv\allterms\setminus\nonconterms$ is zero.
}
\begin{proof}
    Let $G\cup\{\mathcal{A}\}$ be the independent, commuting set of observables associated to the noncontextual state $(\vec{q},\vec{r})$ describing $\nonconterms$ (see \eqref{noncon_params_def} and \eqref{noncon_generators}): the values assigned to $G\cup\{\mathcal{A}\}$ in the noncontextual state $(\vec{q},\vec{r})$ are
    \begin{equation}
    	G_j~\mapsto~q_j=\pm1
    \end{equation}
    for each $G_j\in G$, and
    \begin{equation}
    	\mathcal{A}\equiv\sum_{i=1}^Nr_iA_i~\mapsto~+1
    \end{equation}
    (see \eqref{A_def} and the associated discussion).
    Let $P$ be a Pauli operator in $\mathcal{S}\setminus\nonconterms$.
    
    \textbf{Case 1.} If $P$ anticommutes with any operator in $G$, then $\langle P\rangle=0$, since any quantum state consistent with $(\vec{q},\vec{r})$ is a simultaneous eigenstate of $G$.
    
    \textbf{Case 2.} If $P$ commutes with the operators in $G$ and also with the $A_i$, then:
    \begin{enumerate}
        \item if $P$ is a product of operators in $G$, then $P$ can be inferred from $G$, so $P$ must in fact be included in $\nonconterms$, since by assumption $\nonconterms$ is closed under inference within $\mathcal{S}$. This follows immediately from \cref{closure_under_inference_within}, the definition of closure under inference with $\mathcal{S}$.
        \item if $P$ is not a product of operators in $G$, then $P$ is unconstrained by the noncontextual state, and may take any expectation value, including zero.
    \end{enumerate}
    
    \textbf{Case 3.} Finally, suppose $P$ commutes with the operators in $G$, but anticommutes with at least one of the $A_i$.
    In this case we want to prove that there exists a $+1$-eigenstate of $\mathcal{A}$ for which $\langle P\rangle=0$, as follows:
    
    Let $I_P$ be the set of indices such that
    \begin{equation}
    \begin{split}
        i\in I_P\quad&\Rightarrow\quad[P,A_i]=0,\\
        i\notin I_P\quad&\Rightarrow\quad\{P,A_i\}=0.
    \end{split}
    \end{equation}
    If $I_P$ is empty, then $P$ anticommutes with all of the $A_i$: thus since
    \begin{equation}
        \sum_{i=1}^N\langle A_i\rangle^2=\sum_{i=1}^Nr_i^2=1
    \end{equation}
    (see \eqref{A_def} and the associated discussion), and
    \begin{equation}
        \langle P\rangle^2+\sum_{i=1}^N\langle A_i\rangle^2\le1
    \end{equation}
    (by \cref{blochball}), it follows that $\langle P\rangle=0$.
    
    The remaining case is when $I_P$ is nonempty; there also exist $i\notin I_P$ by assumption.
    Let
    \begin{equation}
        K\equiv\sum_{i\in I_P}r_iA_i,\quad L\equiv\sum_{i\notin I_P}r_iA_i;
    \end{equation}
    thus
    \begin{equation}
        \mathcal{A}=K+L
    \end{equation}
    and
    \begin{equation}
        \quad[K,P]=0,\quad\{L,P\}=0,\quad\{K,L\}=0.
    \end{equation}
    Since $K$ and $L$ are linear combinations of anticommuting Pauli operators, their eigenvalues are $\pm k$ and $\pm l$, respectively, where
    \begin{equation}
        k\equiv\sqrt{\sum_{i\in I_P}r_i^2},\quad l\equiv\sqrt{\sum_{i\notin I_P}r_i^2}.
    \end{equation}
    Therefore,
    \begin{equation}
    \label{quasi_self_inv}
        K^2=k^2\mathds{1},\quad L^2=l^2\mathds{1},\quad k^2+l^2=1.
    \end{equation}
    
    Since $P$ commutes with $K$ and is a Pauli operator, $PK$ is also an observable with eigenvalues $\pm k$, which commutes with $L$ (since both $P$ and $K$ anticommute with $L$).
    Thus, $PK$ commutes with $\mathcal{A}$, so within the $+1$-eigenspace of $\mathcal{A}$ there exist eigenstates $|\pm\rangle$ of $PK$ with eigenvalues $\pm k$, i.e.,
    \begin{equation}
    \label{eig_vals}
        \mathcal{A}|\pm\rangle=|\pm\rangle,\quad PK|\pm\rangle=\pm k|\pm\rangle.
    \end{equation}
    Note that since $P\notin\nonconterms$, $P$ cannot be written as a product of operators in $G$ with any of the $A_i$, so both of the states $|\pm\rangle$ are consistent with the noncontextual state $(\vec{q},\vec{r})$.
    
    The noncontextual state gives us the expectation values of the $A_i$:
    \begin{equation}
        \langle\pm|A_i|\pm\rangle=r_i
    \end{equation}
    (see \eqref{A_def} and the corresponding discussion).
    This means that in addition to \eqref{eig_vals}, we have
    \begin{equation}
    \label{exp_val}
        \langle\pm|K|\pm\rangle=\sum_{i\in I_P}r_i\langle\pm|A_i|\pm\rangle=\sum_{i\in I_P}r_i^2=k^2.
    \end{equation}
    
    Define $|\psi\rangle\equiv\frac{1}{\sqrt{2}}\Big(|+\rangle+|-\rangle\Big)$; then
    \begin{equation}
    \label{one_sign}
    \begin{split}
        \langle\psi|P|\psi\rangle&=\frac{1}{k^2}\langle\psi|PK^2|\psi\rangle\quad\text{by \eqref{quasi_self_inv}}\\
        &=\frac{1}{2k^2}\Big(\langle+|+\langle-|\Big)K\Big(PK|+\rangle+PK|-\rangle\Big)\\
        &=\frac{1}{2k}\Big(\langle+|+\langle-|\Big)K\Big(|+\rangle-|-\rangle\Big)\quad\text{by \eqref{eig_vals}}\\
        &=\frac{1}{2k}\Big(\langle+|K|+\rangle-\langle+|K|-\rangle+\langle-|K|+\rangle-\langle-|K|-\rangle\Big)\\
        &=-\frac{1}{2k}\Big(\langle+|K|-\rangle-\langle-|K|+\rangle\Big)\quad\text{by \eqref{exp_val}}.
    \end{split}
    \end{equation}
    But since $K$ and $P$ commute, we similarly have
    \begin{equation}
    \label{other_sign}
    \begin{split}
        \langle\psi|P|\psi\rangle&=\frac{1}{k^2}\langle\psi|PK^2|\psi\rangle\quad\text{by \eqref{quasi_self_inv}}\\
        &=\frac{1}{2k^2}\Big(\langle+|KP+\langle-|KP\Big)K\Big(|+\rangle+|-\rangle\Big)\\
        &=\frac{1}{2k}\Big(\langle+|-\langle-|\Big)K\Big(|+\rangle+|-\rangle\Big)\quad\text{by \eqref{eig_vals}}\\
        &=\frac{1}{2k}\Big(\langle+|K|+\rangle+\langle+|K|-\rangle-\langle-|K|+\rangle-\langle-|K|-\rangle\Big)\\
        &=\frac{1}{2k}\Big(\langle+|K|-\rangle-\langle-|K|+\rangle\Big)\quad\text{by \eqref{exp_val}}.
    \end{split}
    \end{equation}
    Together, \eqref{one_sign} and \eqref{other_sign} imply that
    \begin{equation}
        \langle\psi|P|\psi\rangle=-\langle\psi|P|\psi\rangle=0,
    \end{equation}
    so since $|\psi\rangle$ is a state in the $+1$-eigenspace of $\mathcal{A}$, we are done.
\end{proof}
~
\noindent
\begin{lemma}
    \label{diagonalizecommutinglemma}
    For any set of $M$ independent, commuting Pauli operators, there exists an efficiently calculable unitary rotation $U$, given by a sequence of at most $2M$ $\frac{\pi}{2}$-rotations generated by Pauli operators, that maps the set to a set of distinct single-qubit $Z$ operators.
\end{lemma}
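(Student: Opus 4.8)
The plan is to proceed constructively and inductively, diagonalizing the commuting operators one at a time, exploiting the fact that a $\frac{\pi}{2}$-rotation $e^{-i\frac{\pi}{4}Q}$ generated by a Pauli operator $Q$ acts by conjugation as the identity on any Pauli commuting with $Q$, and sends any Pauli $P$ anticommuting with $Q$ to $iPQ$, which is again a single Pauli operator. Such rotations are Clifford, so they preserve both commutation relations and linear independence of the generators. Labelling the target operators $G_1,\dots,G_M$, I would maintain the inductive hypothesis that, after processing $G_1,\dots,G_{j-1}$, these have been mapped to single-qubit $Z$ operators $Z_{i_1},\dots,Z_{i_{j-1}}$ on distinct qubits $i_1,\dots,i_{j-1}$ using at most $2(j-1)$ rotations, while $G_j,\dots,G_M$ (in their current, conjugated form) still commute with the $Z_{i_k}$ and remain independent.

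For the inductive step I would analyze the current form of $G_j$. Because $G_j$ commutes with each single-qubit $Z_{i_k}$, its tensor factor on every qubit $i_k$ must be $\mathds{1}$ or $Z$. Because $G_j$ is independent of $Z_{i_1},\dots,Z_{i_{j-1}}$, it cannot be a product of them, so it must act nontrivially on some qubit $m\notin\{i_1,\dots,i_{j-1}\}$. I then reduce $G_j$ to $Z_m$ in at most two rotations. If the factor of $G_j$ on qubit $m$ is already $X$ or $Y$, one rotation suffices: writing $G_j=P_m\otimes R$ with $P_m\in\{X,Y\}$, I choose $Q=P_m'\otimes R$, where $P_m'$ is the single-qubit Pauli on $m$ with $P_mP_m'\propto Z$; then $G_jQ\propto Z_m\otimes\mathds{1}$, the operators $G_j$ and $Q$ anticommute since they differ only on qubit $m$, and the rotation generated by $Q$ sends $G_j$ to $\pm Z_m$. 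If instead $G_j$ acts as $Z$ on qubit $m$ (hence has no $X$ or $Y$ anywhere), I first spend one rotation generated by $Y_m$ to turn that $Z_m$ into $X_m$, reducing to the previous case, for two rotations total. Replacing $Q$ by $-Q$ (equivalently reversing the sign of the rotation angle) fixes the final sign to $+Z_m$. Summing over $j$ gives the bound of at most $2M$ rotations, and the construction is manifestly efficient, since each step is a polynomial-time update of the binary symplectic representation.

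The step I expect to be the main obstacle, and the one requiring the most care, is verifying that processing $G_j$ does not disturb the already-diagonalized operators $Z_{i_1},\dots,Z_{i_{j-1}}$; this reduces to checking that each rotation generator commutes with every $Z_{i_k}$. For the $Y_m$ rotation this is immediate, since $m\neq i_k$. For the rotation generated by $Q=P_m'\otimes R$, I would observe that $Q$ agrees with $G_j$ on all qubits except $m$, so on each qubit $i_k\neq m$ it acts exactly as $G_j$ does, namely as $\mathds{1}$ or $Z$, and therefore commutes with $Z_{i_k}$. The remaining bookkeeping, that the qubit $m$ is genuinely fresh (guaranteed by independence, which Clifford conjugation preserves) and that the resulting $Z_m$ is distinct from the previously produced operators, then follows and completes the induction.
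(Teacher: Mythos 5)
Your proposal is correct and takes essentially the same route as the paper's proof: diagonalize the operators one at a time, using a single $\frac{\pi}{2}$-rotation generated by a Pauli that agrees with the target everywhere except one off-diagonal position (sending it to $\pm$ a single-qubit $Z$ there), preceded, when the operator is diagonal, by a single-qubit $Y$ rotation on a fresh qubit whose existence follows from independence, and with the same commutation argument showing that previously produced single-qubit $Z$ operators are undisturbed. The only cosmetic difference is that you fix the fresh qubit $m$ at the outset via independence, whereas the paper invokes independence only in the diagonal case (the off-diagonal case locates the new $Z$ at an $X$/$Y$ position, which is automatically fresh).
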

\begin{proof}
Let $\{B_i~|~i=1,2,...,M\}$ be a commuting set of Pauli operators.
We may write $B_i$ as
\begin{equation}
	\label{Aexpansion}
	B_i=\bigotimes_{k=1}^N\sigma^{(B_i)}_k,
\end{equation}
where each $\sigma^{(B_i)}_k\in\{I,X,Y,Z\}$.
We consider two cases:

\textbf{Case 1.} Suppose $B_i$ is not diagonal, meaning that there is some $k$ such that $\sigma^{(B_i)}_k\in\{X,Y\}$.
Consider the Pauli operator
\begin{equation}
	J_i=\bigotimes_{k=1}^N\sigma^{(J_i)}_k,
\end{equation}
defined as follows: for each $k$...
\begin{align}
	\sigma^{(B_i)}_k=I~&\Rightarrow~\sigma^{(J_i)}_k=I,\\
	\sigma^{(B_i)}_k=Z~&\Rightarrow~\sigma^{(J_i)}_k=Z,\\
	\sigma^{(B_i)}_k=X\text{ or }Y~&\Rightarrow~\sigma^{(J_i)}_k=X\text{ or }Y,\label{last_line}
\end{align}
where the values $\sigma^{(J_i)}_k=X\text{ or }Y$ in \eqref{last_line} are chosen so that $\sigma^{(B_i)}_k$ and $\sigma^{(J_i)}_k$ differ for exactly one $k$ (as we noted above, at least one of the $\sigma^{(B_i)}_k$ is $X$ or $Y$ if $B_i$ is not diagonal.)
This guarantees that $J_i$ anticommutes with $B_i$.

Consider a rotation by $\pi/2$ generated by $J_i$, i.e.,
\begin{equation}
	\exp\left(i\frac{\pi}{4}J_i\right)=\frac{1}{\sqrt{2}}(1+iJ_i).
\end{equation}
Upon conjugating $B_i$ by this operator, we obtain
\begin{equation}
\label{diagonalizer}
\begin{split}
	&\frac{1}{2}(1+iJ_i)B_i(1-iJ_i)\\
	&=\frac{1}{2}(B_i-iB_iJ_i+iJ_iB_i+J_iB_iJ_i)\\
	&=\frac{1}{2}(B_i+2iJ_iB_i-J_iJ_iB_i)\\
	&=iJ_iB_i\\
	&=i\bigotimes_{k=1}^N\sigma^{(J_i)}_k\sigma^{(B_i)}_k,
\end{split}
\end{equation}
where the third line follows because $J_i$ anticommutes with $B_i$, and the fourth line follows because $J_i$ is self-inverse.
By the conditions on the $\sigma^{(J_i)}_k$, we see that $\sigma^{(J_i)}_k\sigma^{(B_i)}_k=I\text{ or }\pm iZ$ for each $k$, and $\pm iZ$ appears exactly once, so \eqref{diagonalizer} becomes
\begin{equation}
	\frac{1}{2}(1+iJ_i)B_i(1-iJ_i)=\pm\bigotimes_{k=1}^N\sigma^{(D_i)}_k,
\end{equation}
where all $\sigma^{(D_i)}_k=I$ except one, which is $Z$.
In other words, the rotation about $J_i$ has mapped $B_i$ to a single-qubit $Z$ operator, as desired.

In each step, we apply the rotation $\exp\left(i\frac{\pi}{4}J_i\right)$ to all operators in the set. Thus we might worry that, having already mapped some subset of the $B_{i'}$ to single-qubit $Z$ operators $D_{i'}$, applying some later rotation $\exp\left(i\frac{\pi}{4}J_i\right)$ to map $B_i$ to a single-qubit $Z$ operator could change the previously obtained $D_{i'}$.
This turns out not to be the case, as we now show:

Consider some particular one of the $D_{i'}$, whose expansion as a tensor product is
\begin{equation}
	D_{i'}=\bigotimes_{k=1}^N\sigma^{(D_{i'})}_k
\end{equation}
where one of the $\sigma^{(D_{i'})}_k$ is $Z$ and the others are $I$.
$D_{i'}$ commutes with $B_i$, since the previously applied rotations preserve commutation relations, so for all values of $m$ such that $\sigma^{(D_{i'})}_m=Z$, $\sigma^{(B_i)}_m$ (as defined in \eqref{Aexpansion}) must be $I$ or $Z$.
But this implies that $J_i$ also commutes with $D_{i'}$, since we know that $\sigma^{(J_i)}_m$ is $I$ or $Z$ exactly when $\sigma^{(B_i)}_m$ is $I$ or $Z$, and thus $\sigma^{(J_i)}_m$ is $I$ or $Z$.

Therefore, the rotation that maps $B_i$ to a single-qubit $Z$ preserves the previously obtained $D_{i'}$.

\textbf{Case 2.} Suppose $B_i$ is diagonal, so $\sigma^{(B_i)}_k\in\{I,Z\}$ for all $k$.
Since any previously-obtained $D_{i'}$ are single-qubit $Z$ operators and we assumed that the entire set is independent, $B_i$ cannot be the product of any subset of the previously-obtained $D_{i'}$.
Therefore, there must exist some $m\in\{1,2,...,n\}$ such that 
\begin{equation}
	\sigma^{(B_i)}_m=Z
\end{equation}
and
\begin{equation}
	\sigma^{(D_{i'})}_m=I
\end{equation}
for all of the previously-obtained $D_{i'}$.

Apply the rotation $\exp\left(i\frac{\pi}{4}K_i\right)$, for $K_i$ defined by
\begin{equation}
	K_i=\bigotimes_{k=1}^N\sigma^{(K_i)}_k,
\end{equation}
where
\begin{equation}
	\sigma^{(K_i)}_m=Y
\end{equation}
and
\begin{equation}
	\sigma^{(K_i)}_k=I
\end{equation}
for all $k\neq m$.
Thus $\exp\left(i\frac{\pi}{4}K_i\right)$ commutes with and therefore does not change any previously-obtained $D_{i'}$.

As in Case 1, applying $\exp\left(i\frac{\pi}{4}K_i\right)$ to $B_i$ obtains
\begin{equation}
\label{undiag}
	\exp\left(i\frac{\pi}{4}K_i\right)B_i\exp\left(-i\frac{\pi}{4}K_i\right)=i\bigotimes_{k=1}^N\sigma^{(K_i)}_k\sigma^{(B_i)}_k,
\end{equation}
where by construction,
\begin{equation}
	\sigma^{(K_i)}_m\sigma^{(B_i)}_m=iX
\end{equation}
and
\begin{equation}
	\sigma^{(K_i)}_k\sigma^{(B_i)}_k=\sigma^{(B_i)}_k
\end{equation}
for all $k\neq m$.
In other words, the rotation has changed the $Z$ at the $m$th spot in $B_i$ into an $X$, and left $B_i$ otherwise unchanged.
We also apply this rotation to all other operators in the set, which does not change those that have already been mapped to single-qubit $Z$ operators, as we noted above.

Now $B_i$ is no longer diagonal, so we proceed as described in Step 1 above, applying a second Pauli $\frac{\pi}{2}$-rotation to map $B_i$ to a single-qubit Pauli operator.
\end{proof}

\section{CS-VQE implementation details}
\label[appendix]{applications_app}

\begin{figure*}
\centering
\includegraphics[width=7in]{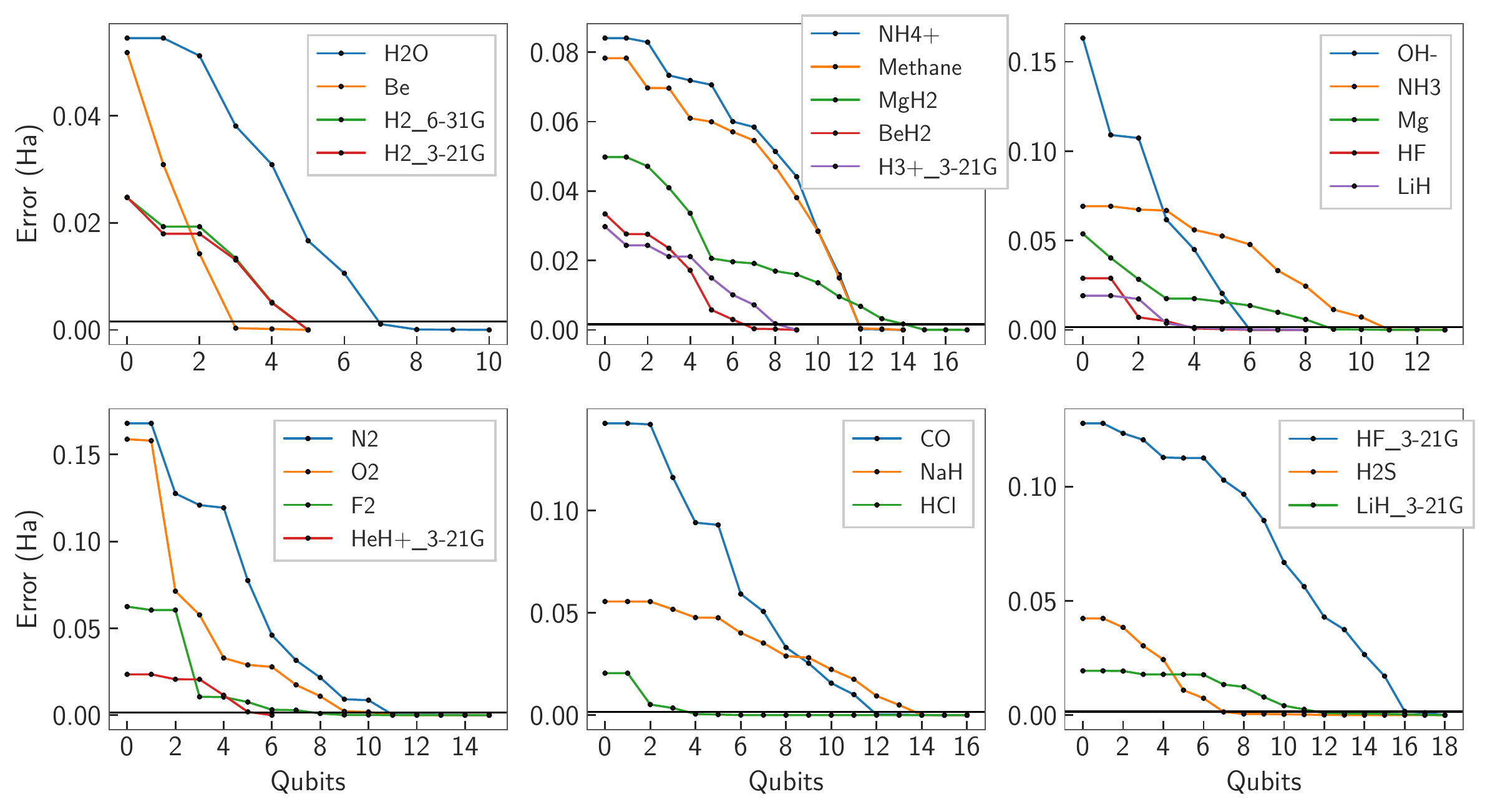}
\caption{
CS-VQE approximation errors versus number of qubits used on the quantum computer, for tapered molecular Hamiltonians, using the qubit ordering chosen by the optimal heuristic. All Hamiltonians whose curves overlap in the region below chemical accuracy have the same total numbers of qubits. The solid black lines indicate chemical accuracy. Within each subplot, the ordering of the legend matches the vertical ordering of the leftmost points in the curves.
\label{errors_vs_qubits_opt}
}
\end{figure*}

\begin{figure}
\centering
\includegraphics[width=\columnwidth]{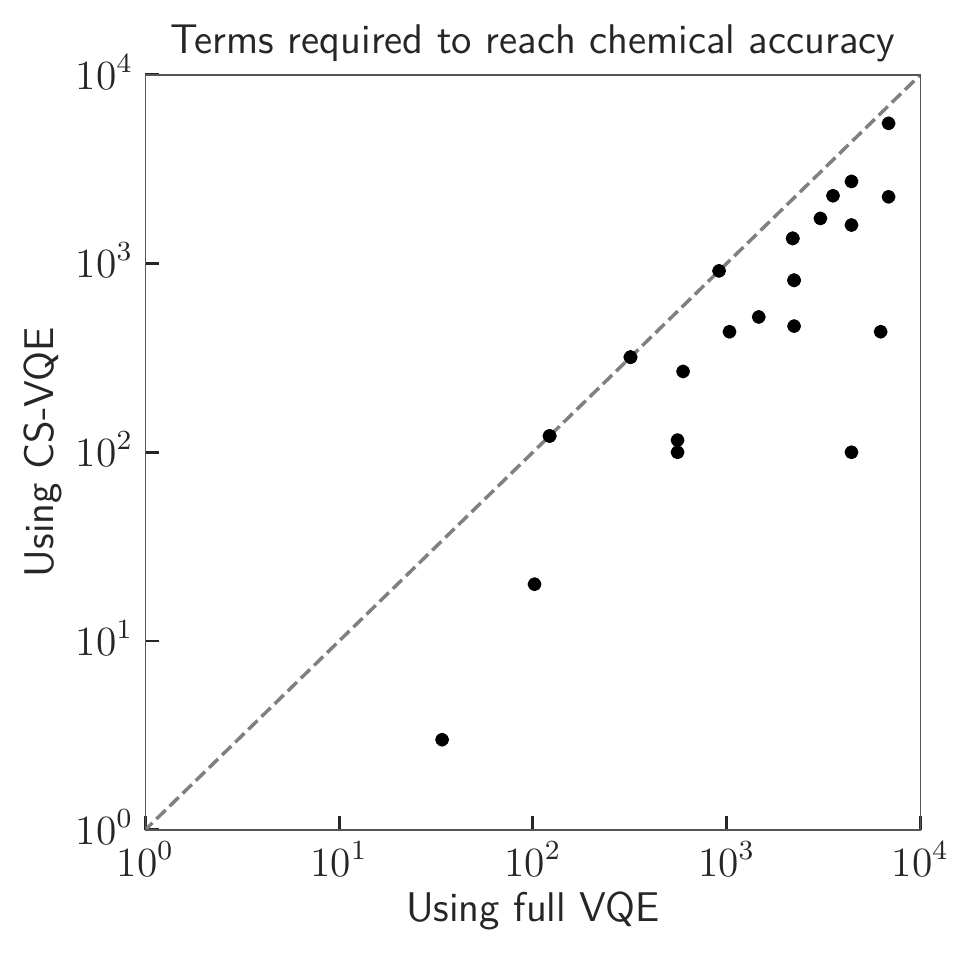}
\caption{
Number of terms simulated on the quantum processor required to reach chemical accuracy using CS-VQE versus using full VQE, for qubit ordering chosen by the optimal heuristic.
The dashed line marks equality.
All points represent either one, two, or three Hamiltonians.
\label{fig:terms_csvqe_vs_full_opt}}
\end{figure}

\subsection{Moving qubits from the noncontextual approximation to the quantum correction}

We describe in detail how the noncontextual approximation is truncated to make room for improved quantum corrections.
As in \cref{quantum_part}, we work in the rotated basis, denote by $\mathcal{H}_1$ the subspace of $n_1$ qubits acted upon by the noncontextual generators $G'$, and denote by $\mathcal{H}_2$ the subspace of the $n_2$ remaining qubits, which are used to implement the quantum correction.
Let $\mathcal{I}_2$ be the set of indices of these latter $n_2$ qubits (those not acted upon by $G'$), whose Hilbert space is the quantum search space $\mathcal{H}_2$.
To increase the size of $\mathcal{H}_2$, we first need to select some subset of $G'$ that we want to remove from $\nonconham$.
Since the elements of $G'$ are single-qubit $Z$ operators, this subset defines a set $\mathcal{I}_\text{add}$ of indices for qubits whose states are initially fixed by the noncontextual state, but that we will switch to simulating on the quantum processor.
To begin with, from $\nonconterms'$ (the terms in the noncontextual part of the Hamiltonian, in the rotated basis) we remove all terms that act on any of the qubits in $\mathcal{I}_\text{add}$, including the elements of $G'$ that act on these qubits.
The remaining elements of $G'$ form a new generating set $G''\subset G'$ satisfying
\begin{equation}
\label{new_generators}
    |G''|=|G'|-|\mathcal{I}_\text{add}|.
\end{equation}

Let the new noncontextual set of terms be denoted $\nonconterms''$, and let
\begin{equation}
    \nonconham''\equiv\sum_{P\in\nonconterms''}h_PP.
\end{equation}
All terms that were removed from $\nonconterms'$ to obtain $\nonconterms''$ should be added to $\extraterms'$ to obtain an expanded contextual set of terms $\extraterms''$, whose corresponding Hamiltonian is
\begin{equation}
    \extraham''\equiv\sum_{P\in\extraterms''}h_PP.
\end{equation}

We can see that this removal operation preserves closure under inference of $\nonconterms''$ by again thinking of $G'\cup\mathcal{A}'$ as stabilizers (up to some signs) for the contextual subspace.
The elements of $G'$ are single Pauli operators that are generators for the stabilizer group of the subspace.
In order to increase the dimension of the stabilized subspace, therefore, for each element $G_i'$ of $G'$ that we remove we must also remove all elements of the stabilizer group that include $G_i'$ as a factor.
In other words, in this instance preserving closure under inference is equivalent to preserving closure of a stabilizer group.

We can now implement the quantum correction on $\nonconham''$ and $\extraham''$, keeping the same noncontextual ground state that we began with, but only applying its value assignments to terms in $\nonconham''$.
Let $n_2''$ denote the new number of qubits used in the quantum correction procedure: then by \eqref{new_generators},
\begin{equation}
    n_2''=n-|G''|=n-|G|+|\mathcal{I}_\text{add}|=n_2+|\mathcal{I}_\text{add},
\end{equation}
where $n_2$ was the initial number of qubits used in the quantum correction procedure.

\subsection{Details of heuristics}

The heuristic described in the main text starts from the pure noncontextual approximation and moves qubits two at a time to the quantum correction search space, greedily maximizing the improvement to the error at each step.
We guessed that this heuristic might perform well for the following reasons.
For the molecular Hamiltonians we tested, the noncontextual Hamiltonians chosen via the greedy heuristic discussed in \cite{kirby20a} contain the diagonal terms in the full Hamiltonian (i.e., tensor products of combinations of Pauli $Z$ and single-qubit identity), together with a single clique containing some off-diagonal terms.
In particular, this means that the generating set $G$ comprises single-qubit $Z$ operators in the original basis, so the rotated basis is identical to the original basis.
The highest weight terms that are not included in the noncontextual part are those containing only one or two off-diagonal Pauli tensor factors.

As discussed in the main text, terms in the quantum correction Hamiltonian are ``freed'' to be optimized when the generators in the noncontextual part that they anticommute with are dropped.
In the present case, these generators are the single-qubit $Z$ operators that act on the qubits for which the terms in the quantum correction Hamiltonian are off-diagonal.
Thus, greedily dropping pairs of generators allows a new subset of the highest weight terms in the quantum correction Hamiltonian to be freed for optimization at each step in the heuristic.

For other heuristics, we refer to the one that led to the best errors of any we tested as the optimal heuristic.
The optimal heuristic begins with full VQE, then adds qubits to the noncontextual approximation one at a time while greedily minimizing the error penalty at each step.
This heuristic is consequently as hard as performing full VQE, so using it in practical implementations would negate the value of CS-VQE.
However, it is informative because it provides a good approximation to the optimal orderings and consequently to the ideal performance of CS-VQE.

For comparison to \cref{errors_vs_qubits_pract} and \cref{fig:terms_csvqe_vs_full_practical}, the errors versus qubits and terms to reach chemical accuracy figures for the heuristic in the main text, we include here the corresponding figures for the optimal heuristic, as \cref{errors_vs_qubits_opt} and \cref{fig:terms_csvqe_vs_full_opt}.
Notably, the heuristic included in the main text matches the number of qubits required to reach chemical accuracy using the optimal heuristic in all cases except F$_2$, LiH in the 3-21G basis, and Mg: in these cases the heuristic in the main text requires one more qubit than the optimal heuristic.
In fact, for HeH+ in the 3-21G basis, the heuristic in the main text requires one fewer qubit to reach chemical accuracy than the optimal heuristic.

As an alternative to heuristics that calculate actual energies, we tested two heuristics based on the total weight of the Pauli terms associated to each qubit.
Both starting from the full VQE end (greedily minimizing the penalty for each qubit removed) and starting from the noncontextual end (greedily maximizing the improvement for each qubit added) had identical performance for the examples we tested, but unfortunately that performance was substantially worse than the performance of the heuristic discussed in the main text (as well as the optimal heuristic).

\section{Noncontextual Hamiltonians}
\label[appendix]{noncontextual_part_app}

As noted in the main text, a set of observables is \emph{noncontextual} when it admits consistent joint valuations.
The kind of contradiction that might prevent such a joint valuation is closely related to the notion of \emph{inference}, which we can introduce as follows.
If a pair of observables $A,B$ commute, then they can be measured simultaneously together with their product $AB$.
Thus, if we attempt to construct a classical, ontological model for some set of observables including $A$ and $B$, in any assignment of values, the value assigned to $AB$ must be the product of the values assigned to $A$ and $B$.
This is based on the fact that if an observer measures $A$, $B$, and $AB$, the values they obtain will always be consistent with the product relation.
Hence, we say that given an assignment of values to $A$ and $B$, we may \emph{infer} the value assignment to $AB$.

\begin{definition}[see \cite{kirby19a}]
\label{closure_under_inference_def}
    Given an arbitrary set $\allterms$ of Pauli operators, the \emph{closure under inference} $\overline{\allterms}$ of $\allterms$ is the minimal set of Pauli operators, containing $\allterms$ as a subset, such that for every commuting pair $A,B$ in $\overline{\allterms}$, $AB$ is also in $\overline{\allterms}$.
\end{definition}
\begin{definition}[see \cite{kirby19a}]
\label{noncontextuality_def}
    A set $\allterms$ of Pauli operators is \emph{noncontextual} if it is possible to assign values ($\pm1$) to $\overline{\allterms}$ that respect all inference relations in $\overline{\allterms}$, i.e., such that for every commuting pair $A,B\in\overline{\allterms}$, the value assigned to $AB$ is the product of the values assigned to $A,B$.
\end{definition}

A set of Pauli operators $\nonconterms$ is noncontextual if and only if it has the form
\begin{equation}
\label{noncontextual_form}
    \nonconterms=\mathcal{Z}\cup C_1\cup C_2\cup\cdots\cup C_N,
\end{equation}
where $\mathcal{Z}$ is the subset of $\nonconterms$ containing all operators in $\nonconterms$ that commute with all other operators in $\nonconterms$, operators in the same $C_i$ commute, and operators in different $C_i$ anticommute \cite{kirby19a}.
In other words, $\nonconterms$ is noncontextual if and only if commutation is transitive on $\nonconterms\setminus\mathcal{Z}$.
If commutation is transitive on a set, then because it is always reflexive and symmetric, it is an equivalence relation: as noted in the main text, we label its equivalence classes $C_i$, and also refer to these as cliques.

As in the main text, we partition a general Hamiltonian $\ham$ into a noncontextual part $\nonconham$ and the remaining terms $\extraham$, where the associated sets of Pauli operators are $\allterms$, $\nonconterms$, and $\extraterms$, respectively.
$\nonconterms$ must be closed under inference within $\allterms$, which we may now define rigorously in terms of \cref{closure_under_inference_def}: $\nonconterms$ is closed under inference within $\allterms$ if and only if
\begin{equation}
\label{c_u_i_within}
    \nonconterms=\overline{\nonconterms}\cap\allterms.
\end{equation}
This is simply a formalization of \cref{closure_under_inference_within} in the main text.
Subject to these constraints, we can choose $\nonconham$ in any way we like.

The key step in building a quasi-quantized model for the noncontextual part $\nonconham$ is construction of $\mathcal{R}$, a new set of Pauli operators such that $\overline{\mathcal{R}}=\overline{\nonconterms}$ (so that value assignments to $\mathcal{R}$ induce value assignments to $\nonconterms$ by inference), and $\mathcal{R}$ is independent:
\begin{definition}[see \cite{kirby20a}]
A set $\mathcal{R}$ of Pauli operators is \emph{independent} if no value of any operator $A$ in $\mathcal{R}$ can be inferred from any value assignment to a subset of $\mathcal{R}$ not containing $A$.
Equivalently, $\mathcal{R}$ is independent if and only if for every commuting subset of $\mathcal{R}$, its product is not in $\mathcal{R}$.
\end{definition}
Note that in a commuting set of Pauli operators, this notion of independence reduces to the usual definition of independence of subsets of an Abelian group \cite{kirby20a}.
Requiring $\mathcal{R}$ to be independent means that not only are \emph{some} value assignments to $\mathcal{R}$ allowed (since $\mathcal{R}$ is noncontextual), but in fact \emph{all} value assignments to $\mathcal{R}$ are allowed (which is not true for a general noncontextual set) \cite{kirby20a}.

The independent set $\mathcal{R}$ is given by
\begin{equation}
    \mathcal{R}=G\cup\{A_i~|~i=1,2,...,N\}
\end{equation}
where each $A_i\in C_i$, and $G$ is an independent generating set for
\begin{equation}
\label{univ_comm_terms}
    \mathcal{Z}\cup\bigcup_{i=1}^N\{AA'~|~A,A'\in C_i\}.
\end{equation}
Note that since the set \eqref{univ_comm_terms} is composed of elements of $\mathcal{Z}$ and products of pairs of elements in the same clique, its elements commute with all elements of $\nonconterms$.
All elements of $G$ therefore commute with all of the $A_i$, although the $A_i$ pairwise anticommute (since each is an element of the corresponding $C_i$).
As noted in \eqref{noncon_generators} in the main text, states of the quasi-quantized model turn out to be equivalent to joint knowledge of the set of commuting observables
\begin{equation}
\label{noncon_generators_app}
    G\cup\{\mathcal{A}\},
\end{equation}
where the Pauli operators $G_j\in G$ have values $q_j=\pm1$, and for some unit vector $\vec{r}$ the operator 
\begin{equation}
\label{A_def_app}
    \mathcal{A}\equiv\sum_{i=1}^Nr_iA_i
\end{equation}
has value $+1$ (and consequently each operator $A_i$ has expectation value $r_i$).
Since the Pauli operators $A_i$ anticommute and $|\vec{r}|=1$, $\mathcal{A}$ has eigenvalues $\pm1$, as shown in \cite{kirby20a} (see also \cref{blochball}).
The states of the quasi-quantized model are thus parametrized by the values $(\vec{q},\vec{r})$, so we call such a parameter set a \emph{noncontextual state} (which is the same as an \emph{epistemic state}, as in \cite{spekkens16a,kirby20a}).

The resulting expression for the expectation value of $\nonconham$ is
\begin{equation}
\label{objfn}
    \langle\nonconham\rangle_{(\vec{q},\vec{r})}
    =\sum_{B\in\overline{G}}\left(h_B+\sum_{i=1}^Nh_{B,i}r_i\right)\prod_{j\in\mathcal{J}_B}q_j,
\end{equation}
where the classical state parameters $(\vec{q},\vec{r})$ can take any values such that $q_j=\pm1$ for each $j$ and $|\vec{r}|=1$ \cite{kirby20a}.
The constants in \eqref{objfn}, $h_B$ and $h_{B,i}$, are the coefficients in the original Hamiltonian $\nonconham$ (under an efficiently classically calculable relabeling), and for each $B\in\overline{G}$, $\mathcal{J}_B$ is the set of indices such that
\begin{equation}
    B=\prod_{j\in\mathcal{J}_B}G_j,
\end{equation}
which is also efficiently classically calculable.
Thus \eqref{objfn} expresses the expectation value of the noncontextual part of the Hamiltonian as a classical objective function of the parameters $(\vec{q},\vec{r})$, which may be both obtained and evaluated classically efficiently \cite{kirby20a}.

Given the objective function \eqref{objfn}, estimating the ground state energy of the noncontextual part of the Hamiltonian requires minimizing \eqref{objfn} over the parameters $(\vec{q},\vec{r})$.
For an Hamiltonian of $n$ qubits, the total dimension of $(\vec{q},\vec{r})$ is at most $2n+1$.
As noted in the main text, we refer to the setting $(\vec{q},\vec{r})$ that minimizes \eqref{objfn} as the \emph{noncontextual ground state}.

\end{document}